\newcommand{\arxiv}[1]{\href{http://arxiv.org/abs/#1}{arXiv:#1}}
\newtheorem{theorem}{Theorem}[section]
\newtheorem{lemma}[theorem]{Lemma}
\newtheorem{remark}[theorem]{Remark}
\newcommand{\R}{{\mathbb R}}
\newcommand{\N}{{\mathbb N}}
\newcommand{\Z}{{\mathbb Z}}
\newcommand{\C}{{\mathbb C}}
\newcommand{\be}{\begin{equation}}
\newcommand{\ee}{\end{equation}}
\newcommand{\bea}{\begin{eqnarray}}
\newcommand{\eea}{\end{eqnarray}}
\newcommand{\ol}{\overline}
\newcommand{\what}{\widehat}
\newcommand{\I}{\mathrm{i}}
\newcommand{\E}{\mathrm{e}}
\newcommand{\cn}{\mathrm{cn}}
\newcommand{\sn}{\mathrm{sn}}
\newcommand{\dn}{\mathrm{dn}}
\newcommand{\sech}{\mathrm{sech}}
\newcommand{\sinc}{\mathrm{sinc}}
\newcommand{\eps}{\varepsilon}
\begin{document}

\title{Cnoidal Waves on Fermi--Pasta--Ulam Lattices}

\author{G.~Friesecke and A.~Mikikits-Leitner}

\address{Center for Mathematics, TU Munich, Boltzmannstrasse 3, 85748 Garching bei M\"unchen, Germany}
\ead{\mailto{gf@ma.tum.de}, \mailto{mikikits@ma.tum.de}}

\ams{70F, 70H12, 35Q51, 35Q53, 82B28}

\date{\today}

\begin{abstract}
We study a chain of infinitely many particles coupled by nonlinear springs, obeying the equations of motion
\[ 
   \ddot{q}_n = V'(q_{n+1}-q_n) - V'(q_n-q_{n-1})
\]
with generic nearest-neighbour potential $V$. We show that this chain carries exact spatially periodic travelling waves whose profile is asymptotic,
in a small-amlitude long-wave regime, to the KdV cnoidal waves. 
The discrete waves have three interesting features: (1) being exact travelling waves they keep their shape for infinite time, rather than just up to a timescale of order wavelength$^{-3}$ suggested by formal asymptotic analysis, (2) unlike solitary waves they carry a nonzero amount of energy per particle, (3) analogous behaviour
of their KdV continuum counterparts suggests long-time stability properties under nonlinear interaction with each other. Connections with the
Fermi-Pasta-Ulam recurrence phenomena are indicated. Proofs involve an adaptation of the renormalization approach of \cite{friesecke1999solitary} to a periodic setting
and the spectral theory of the periodic Schr\"odinger operator with KdV cnoidal wave potential. 

\end{abstract}
\noindent{\it Keywords\/}: Fermi--Pasta--Ulam problem, Korteweg--de Vries equation, cnoidal wave solutions, solitons.


\section{Introduction}

The Fermi-Pasta-Ulam model consists of a one-dimensional chain of particles coupled by nonlinear springs, obeying the equations of motion
\begin{equation} \label{equ:FPUintro}
   \ddot{q}_n = V'(q_{n+1}-q_n) - V'(q_n-q_{n-1}). 
\end{equation}
Here $q_n(t)$ is the displacement of the n$^{th}$ particle out of equilibrium at time $t$, and $V$ is an anharmonic potential such as 
$V(\phi)=\phi ^2/2+\alpha \phi^3/3$ with $\alpha\neq 0$ (the FPU-$\alpha$ chain) or $V(\phi)=\phi ^2/2+\beta \phi^4/4$ with $\beta\neq 0$ (the FPU-$\beta$ chain). This model provides a fascinating paradigm of nonlinear Hamiltonian many-particle dynamics. On the one hand, it is simple enough to allow insight by rigorous
mathematical analysis. On the other hand, it already exhibits a rich spectrum of phenomena of wider importance:
coherent signal- and energy transport (as seen in biomolecules such as DNA); near-integrable behaviour (as documented by 
the fact that in certains regimes, the FPU model is well approximated
by the Korteweg-de Vries equation, of which more below); dispersive shocks (as seen in molecularly resolved gas dynamics);
 and statistical irreversibility and thermalization effects despite microscopic reversibility (as described by statistical
mechanics).

A central numerical phenomenon in the system (\ref{equ:FPUintro}) is a crossover from energy trapping in a few long-wave modes at low initial energy per particle to ergodic-like spreading of energy to short-wave modes at high initial energy per particle. See Fermi, Pasta, Ulam and Tsingou\footnote{Mary Tsingou was involved in the numerical work as acknowledged in the original report.} \cite{fermi1955studies} for first observations at low energy, Israilev and Chirikov \cite{izrailev1966statistical} for first observations at high energy, Dreyer and Herrmann \cite{dreyer2008numerical} for energy transfer to short-wave modes via dispersive shocks, and \cite{biello2002stages} for a nice review.    

A significant amount of understanding of the recurrent and non-statistical behaviour at low energy has emerged via approximation of the FPU model by completely integrable systems. In this paper \\
-- we argue that this level of understanding, reviewed below, is not completely satisfactory because it fails to cover the physically most interesting regime,
fixed nonlinearity, nonzero energy per particle, and larger and larger system size \\
-- we hope to convince readers that a deeper understanding could come via establishing existence, and long-time stability under interaction with each other, of spatially periodic and quasi-periodic waves in the infinite FPU chain \\
-- and we rigorously carry out a first step, establishing existence of certain spatially periodic travelling waves in FPU which are good candidates for such special
interaction properties, namely waves which are, in a small-amplitude long-wave limit, asymptotic to the celebrated KdV cnoidal waves. 

In the remainder of this Introduction we review current theoretical understanding of non-statistical behaviour of FPU at low energy, and informally describe
our results. 

{\bf 1. The invariant tori explanation.} One line of thought, going back to \cite{izrailev1966statistical}, is to consider a fixed and finite number of particles only, and linearize, i.e.~approximate the anharmonic interaction potential by a purely harmonic one. The resulting system is, of course, completely integrable, with
the phase space being foliated by the invariant tori given by the set of states with a given fixed amount of energy in each normal mode. 
By Kolmogorov-Arnol'd-Moser (KAM) theory, when switching on the anharmonic terms, many invariant tori survive. This prevents typical solutions from spreading energy to the entire phase space. This argument can be made rigorous by using Birkhoff normal forms \cite{rink2006proof}.

{\bf 2. The soliton explanation.}
Another approach, going back to Zabusky and Kruskal \cite{zabusky1965interaction}, is to keep the number of particles infinite, consider a suitable small-amplitude
long-wave regime, and observe (see Remark~\ref{rem:contlimes} below for their heuristic argument) that (\ref{equ:FPUintro}) is then well-approximated by a \emph{nonlinear, infinite-dimensional} integrable equation, the KdV equation
\begin{equation} \label{equ:KdVZK}
u_t+12\frac{V'''(0)}{V''(0)}uu_x+u_{xxx}=0.
\end{equation}
(More precisely, the latter approximates the FPU-$\alpha$ chain; in case of the FPU-$\beta$ chain one obtains the mKdV equation.)
The explanation of non-statistical behaviour advocated in \cite{zabusky1965interaction} then went as follows: the latter equation carries 'solitons', i.e.~solitary waves which propagate exactly under the nonlinear KdV dynamics; numerically these solitons also
show persistent shapes and velocities after nonlinearly interacting with each other; and general spatially localized initial data can be well approximated by
superposition of finitely many such solitons. Hence, again, energy is prevented from spreading to the entire phase space.
Significant aspects of this picture has nowadays become rigorous mathematics. In particular, the \emph{inverse scattering transform} introduced in 1974 by Gardner et al.~\cite{gardner1974korteweg} allows to prove the radiationless interaction of solitons, as well as show that all spatially localized initial data
asymptotically split up into a superposition of solitons (see e.g. \cite{grunert2009longtime} for more information and further references). And the approximation 
of (\ref{equ:FPUintro}) by (\ref{equ:KdVZK}) can be made rigorous up to timescales of order (wavelength)$^{3}$ for general localized solutions \cite{wayne2000counter},
and globally in time for special solutions of soliton type \cite{friesecke1999solitary, friesecke2002solitary, friesecke2004solitary3, friesecke2004solitary4} and 
multi-soliton type \cite{mizumachi2011nsoliton, mizumachi2009asymptotic}. 

From a physical point of view, neither the invariant tori explanation nor the soliton explanation are completely satisfactory. This is because they are limited to regimes satisfying certain undesirable restrictions:
\begin{itemize}
\item The invariant tori explantion is based on finite-dimensional KAM theory, and hence the allowed size of the anharmonicity tends rapidly to zero as the particle number gets large \cite{wayne1984kam}. It thus does not apply to the natural situation of many particles interacting via a fixed nonlinear potential. 
Very interesting results are available on KAM theory for perturbations of infinite-dimensional systems \cite{kappeler2003kdv}, but perturbing the harmonic lattice
(or indeed other infinite-dimensional integrable systems like the Toda lattice or KdV) into FPU lies well beyond the scope of these results.  
\item The soliton explanation does apply to infinitely many particles interacting via a fixed nonlinear potential, 
but it is only valid for spatially localized states; but spatial localization forces the energy per particle to be 
zero.\footnote{Curiously, the numerical KdV simulations in initial paper \cite{zabusky1965interaction} on the
soliton explanation were done at finite energy per particle, via imposing periodic boundary conditions, but the subsequent
theoretical analysis of soliton interactions was done at zero energy per particle, via just considering a fixed
number of localized solitons on the whole real line.}  
\end{itemize}

How, then, could we access the behaviour of infinite FPU chains with fixed nonlinear potential and nonzero energy per particle? Our own, 
admittedly rather modest, contribution towards this challenge is the following. We show that (\ref{equ:FPUintro}) carries exact spatially periodic travelling waves
\begin{equation} \label{resultone}
   r_n(t) = R(n-ct)
\end{equation}
with small-amplitude long-wave profile
\begin{equation} \label{resulttwo}
   R(n-ct) = \eps^2 \Phi(\eps(n-ct)) + O(\eps^4)
\end{equation}
and near-sonic velocity
\begin{equation} \label{resultthree}
   c^2 = V''(0)(1+\frac{\eps^2}{12}),
\end{equation}
where $r_n(t)=q_{n+1}(t)-q_n(t)$ is the relative displacement between neighbouring particles and $\Phi$ is a KdV cnoidal wave profile of speed 1, i.e.~a spatially periodic function such that $u(x,t)=\Phi(x-t)$ solves the KdV equation (\ref{equ:KdVZK}). 
Such solutions to KdV exist and are known explicitly via algebraic and geometric methods, see Figure~\ref{fig:cnoidalcconst} and the end of
this Introduction. 

\begin{figure}
\centering
\includegraphics[width=.8\textwidth]{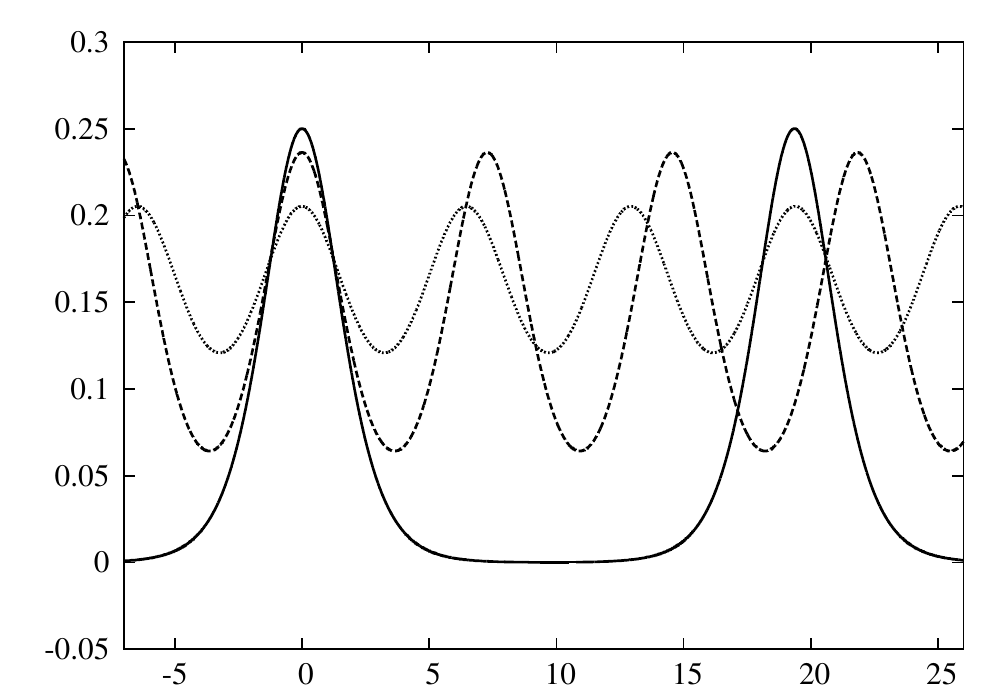}
\caption{Different cnoidal wave solutions of \eref{equ:KdVZK}, all moving with speed one.
We have set $V''(0)=V'''(0)=1$. The three curves represent the solutions given explicitly by (\ref{equ:cnoidalsolkL}), with elliptic modulus $k^2=0.3$ (dotted), $k^2=0.6$ (dashed), and $k^2=0.999$ (solid). The corresponding values of the period are chosen to make the wavespeed equal to one, 
i.e.~$2L=4K(k)(1-k^2+k^4)^{1/4}$, and are approximately $6.5$, $7.3$, and $19.4$.\label{fig:cnoidalcconst}}
\end{figure}

While interesting non-explicit existence results on periodic FPU travelling waves have
been obtained previously \cite{valkering1978periodic, filip1999existence, pankov2000travelling, herrmann2010unimodal},
the key point which makes the new result (\ref{resultone})--(\ref{resultthree}) promising in the context of 
energy trapping in FPU at low energy per particle is now the following. The discrete cnoidal waves 
(\ref{resultone})--(\ref{resultthree}) are {\it not arbitrary FPU travelling waves with finite energy per particle}, 
but are good candidates for exhibiting {\it special stability properties under nonlinearly interacting with each other}, on account of analogous remarkable properties of their continous KdV counterparts. As made more precise below, the KdV cnoidal 
waves are the spatially periodic analoga of the KdV solitons, and appropriate nonlinear superpositions, the so-called finite-gap KdV solutions, are (spatially quasi-periodic) analoga of the KdV multi-solitons. A long-term goal, then, which lies well beyond the scope and tools of this article, would be to investigate existence and long-time stability of discrete FPU-analoga of general finite-gap KdV solutions. Perhaps infinite-time
stability of such waves in the spirit of the deep recent results of Mizumachi~\cite{mizumachi2011nsoliton, mizumachi2009asymptotic} on discrete FPU-analoga of multi-solitons no longer holds, but it is conceivable that long finite-time results analogous to those of Hoffman and Wayne~\cite{hoffman2008counter} on FPU solitary wave interactions can be transferred to a quasi-periodic setting. 

The above line of thought also suggests, as to the best of our knowledge has not been pointed out previously, that the linear normal modes which were nonlinearly evolved
in the original FPU simulations \cite{fermi1955studies} and underlie the ''KAM explanation'' \cite{rink2006proof} enjoy special interaction properties under nonlinear
FPU dynamics. This is because, in the limit of low amplitude, the KdV cnoidal waves become close to linear waves plus 
a constant; see the dotted curve in Figure~\ref{fig:cnoidalcconst} and (\ref{equ:linearlimit}) in Section~\ref{sec:KdVsolutions}. 

Next, let us compare our result (\ref{resultone})--(\ref{resultthree}) to
previous work on localized waves in the FPU model. Existence of travelling waves with constant
asymptotic values at infinity has been established via different approaches: variational 
methods \cite{friesecke1994existence, smets1997solitary, filip1999existence, pankov2000travelling, 
herrmann2010unimodal}, see also \cite{schwetlick2007solitary, herrmann2012subsonic} for a generalization to 
nonconvex potentials, center manifold arguments \cite{iooss2000travelling}, and comparison to KdV 
\cite{friesecke1999solitary}. The latter two approaches are limited to small amplitude, but also deliver the waveform. 

We close this Introduction by describing briefly the above-mentioned finite-gap KdV solutions, and
explaining how they encode and reveal remarkable nonlinear interaction properties of the KdV cnoidal waves
in Figure~\ref{fig:cnoidalcconst}. The finite-gap solutions can be constructed via algebraic and geometric methods going back to
Its, Matveev, Dubrovin, McKean, and van Moerbeke \cite{its1975hill, its1975schroedinger, dubrovin1975periodic, dubrovin1975inverse, mckean1975spectrum}, and are given by
\begin{eqnarray} 
&&u(x,t)=c+2\frac{\partial^2}{\partial x^2}\ln \Theta(\mathbf{\xi}), \label{equ:itsmatveevformula}\\
&&\Theta(\mathbf{\xi};B)=\sum_{\mathbf{k}\in\Z^N}\exp \left\{ 2\pi\I \langle \mathbf{k},\mathbf{\xi}\rangle+\pi \I \langle \mathbf{k},B\mathbf{k}\rangle\right\} ,\nonumber
\end{eqnarray}
where $\langle .,.\rangle$ denotes the scalar product in $\R^N$ and $c$ is a constant. Moreover, $B$ is a symmetric matrix and $\mathbf{\xi}=(\xi_1,\dots,\xi_N)$ denotes the vector of the generalized phases $\xi_j=K_jx-\omega_jt+\phi_j$, $j=1,\dots,N$, which are determined by an underlying prescribable Riemann surface (see~\ref{app:periodicKdVsol}).   
In general, the $K_j$'s and $\omega_j$'s are incommensurable quantities and thus KdV solutions of the form \eref{equ:itsmatveevformula} are quasi-periodic in $x$ and $t$. 
In the one-gap case ($N=1$) the solutions \eref{equ:itsmatveevformula} reduce to the, spatially periodic, KdV cnoidal waves given by 
\begin{equation} \label{equ:cnoidalwave}
u(x,t)=E_2+(E_3-E_2)\cn^2\Big( \sqrt{\frac{E_3-E_1}{2}}\big(x-2(E_1+E_2+E_3)t\big);k^2\Big),
\end{equation}
where $E_1<E_2<E_3$ and $k^2=(E_3-E_2)/(E_3-E_1)$ is the elliptic modulus. Here $\cn$ denotes one of the Jacobi elliptic functions. See Figure \ref{fig:cnoidalcconst}. The solutions \eref{equ:cnoidalwave} were already known to Korteweg and de~Vries in 1895 \cite{korteweg1895change}.

Physically, the general finite-gap solutions \eref{equ:itsmatveevformula} can be interpreted as a linear superposition of $N$ cnoidal waves of the form \eref{equ:cnoidalwave} plus a nonlinear interaction term. Here the diagonal elements of the symmetric matrix $B$ determine the periods of these cnoidal waves, while the off-diagonal elements give their nonlinear interaction. In this precise sense, these solutions are finite-energy-per-unit-volume analoga of the KdV multi-solitons.

For a more detailed description of quasi-periodic KdV solutions we refer to~\ref{app:periodicKdVsol}.

\section{Main result} \label{sec:result}
We now describe precisely or result on existence and shape of cnoidal-type waves in Fermi-Pasta-Ulam chains. 

The governing equations are 
\begin{equation} \label{equ:equofmot}
\partial_t^2q(j,t)=V'(q(j+1,t)-q(j,t))-V'(q(j,t)-q(j-1,t)) \;\;\; (j\in\Z),
\end{equation}
where $q(j,t)$ denotes the displacement out of equilibrium of the $j$th particle. Formally, the associated Hamiltonian energy
\begin{equation}
H(t)=\sum_{j\in \Z} \left( \frac{p(j,t)^2}{2}+V(q(j+1,t)-q(j,t))\right)
\end{equation}
(with $\partial_tq(j,t)=:p(j,t)$ denoting the particle momenta)
is conserved along solutions, but our interest is in infinite-energy solutions.
In this paper we make the following assumptions on the nearest-neighbour potential:
\begin{equation} \label{equ:hypopotential}
V\in C^4, \quad V(0)=V'(0)=0, \quad V''(0)>0, \quad V'''(0)> 0.
\end{equation}
Well-known examples are given by the potential of the cubic FPU chain $V(r)=ar^2/2+b r^3/6$, the Lennard-Jones-(12,6) potential $V(r)=Ar^{-12}-Br^{-6}$, and the Toda potential $V(r)=\alpha (\E^{-\beta r}+\beta r-1)$. The latter gives rise to a completely integrable Hamiltonian system. 
Let us denote by
\begin{equation*}
r(j,t)=q(j+1,t)-q(j,t)
\end{equation*}
the distortion of the $j$th bond length out of equilibrium. Moreover, we introduce the shift operators $S^{\pm}q(j,t)=q(j\pm 1,t)$
for the backward and forward shifts along the lattice, respectively. With these notations the equations of motion \eref{equ:equofmot} become
\begin{equation}
\partial_t^2r(j,t)=(S^+-2+S^-)V'(r(j,t)).
\end{equation}
One can write this equation as a first-order Hamiltonian system of the form
\begin{equation} \label{equ:hamsystem}
\partial_t u=\left(\begin{array}{cc} 0 & S^+-1 \\ 1-S^- & 0\end{array}\right) DH(u),
\end{equation}
where $u=(r,p)$, $H(u)=\sum_{j\in \Z} \left( \frac{1}{2}p(j,t)^2+V(r(j,t))\right)$, and $DH(u)=(V'(r),p)$ denotes the functional gradient of $H$.
By making the travelling wave ansatz $r(j,t)=r_c(j-ct)$ equation \eref{equ:hamsystem} becomes
\begin{equation} \label{equ:equofmot2}
c^2r_c''(x)=(S^+-2+S^-)V'(r_c(x)).
\end{equation}

Since we aim to consider spatially periodic solutions, the function spaces appropriate for our setting will be the periodic Sobolev spaces

\begin{equation*}
H^1_{2L}(\R):= \Big\{ \phi\in \mathcal{P}'_{2L}: \| \phi\|_{H^1_{2L}}^2=2L\sum_{m=-\infty}^{+\infty}\big(1+(\frac{m\pi}{L})^2\big)|\what{\phi}(m)|^2<\infty\Big\},
\end{equation*}
where $\mathcal{P}'_{2L}$ is the set of $2L$-periodic distributions, i.e.~the set of all continuous linear functionals from $\mathcal{P}_{2L}=C^{\infty}_{2L}$ (the set of all smooth $2L$-periodic functions from $\R$ into $\C$) into $\C$.
The Fourier series of $r_c\in H^1_{2L}(\R)$ and its Fourier coefficients $\what{r}_c$ are given by
\begin{equation} \label{equ:Fouriertrafo}
r_c(x)=\sum_{m=-\infty}^{+\infty}\what{r}_c(m)\E^{\I\frac{m\pi}{L}x}, \qquad \what{r}_c(m)=\frac{1}{2L}\int_0^{2L}r_c(x)\E^{-\I\frac{m\pi}{L}x}dx.
\end{equation}
The space $H^1_{2L}(\R)$ is a Hilbert space with the inner product
\begin{equation*}
\langle \phi, \psi\rangle_{H^1_{2L}}=2L\sum_{m=-\infty}^{+\infty}\big(1+(\frac{m\pi}{L})^2\big)\what{\phi}(m)\ol{\what{\psi}(m)}.
\end{equation*}

\begin{remark} \label{rem:contlimes} Formally, as first noted by Zabusky and Kruskal \cite{zabusky1965interaction} the KdV equation arises from \eref{equ:equofmot2} as follows.
One assumes that there exist solutions satisfying the multiscale ansatz 
\begin{equation} \label{ZKansatz}
r_c(z)=\eps^2\Phi(\eps (x-ct)), \qquad \frac{c}{c_s}=1+\frac{\eps^2 c_{KdV}}{24},
\end{equation}
which describes waves travelling at approximately the speed of sound while showing a negligible temporal and spatial change in their shape. 
Then by Taylor expanding differences and neglecting terms of order $O(\eps^8)$ in equation \eref{equ:equofmot2} one obtains a KdV equation - arising as the coefficients of the $\eps^6$-terms - of the form
\begin{equation} \label{equ:KdVtransfo}
-\Phi''+6\frac{V'''(0)}{V''(0)}(\Phi^2)''+\Phi''''=0.
\end{equation}
From comparison of the $\eps^4$-terms one obtains the relation $c_s=\sqrt{V''(0)}$ for the speed of sound.
\end{remark}

The ansatz \eref{ZKansatz} can be justified rigorously via a renormalization appraoch adapted from \cite{friesecke1999solitary} (see Section~\ref{sec:renorm}), spectral theory for the Schr\"odinger operator with KdV
cnoidal wave potential (see Section~\ref{sec:KdVsolutions} and~\ref{app:Lame}), and an implicit function theorem argument (see Section~\ref{sec:existence}). This leads to the following rigorous persistence result for 
cnoidal waves
when deforming the limiting KdV equation back into the lattice equation. 

\begin{theorem} \label{thm:mainresult}
Assume that the nearest neighbour interaction potential $V$ satisfies the assumptions \eref{equ:hypopotential}, 
and let $c_s=\sqrt{V''(0)}$ denote the sonic wave velocity. Fix the constants $k_0\in(0,1)$ and 
$L_0\in\R^+$ such that $c_{KdV}(k_0,L_0)=4K^2(k_0)\sqrt{1-k_0^2+k_0^4}/L_0^2=1$, where $K(k)$ denotes the
complete elliptic integral of the first kind, i.e.~$K(k)=\int_0^{\pi/2}ds/\sqrt{1-k^2\sin s}$.
Then the following statements hold:
\begin{itemize}
\item[(a)] (existence and local uniqueness) Given $\delta>0$ sufficiently small, there exists $\eps_0>0$ such that, for $\eps \in (0,\eps_0)$ and $c^2/c_s^2=(1+\eps^2/12)$, the governing equation \eref{equ:equofmot2} for the periodic travelling wave profile admits a unique solution $r_c$ in the set
\begin{equation*}
\{ r\in H^1_{2L_0}(\R)| r\textrm{ even, } \vert| \eps^{-2}r(\eps^{-1}. )-\Phi_1^{(k_0,L_0)}\vert|_{H^1_{2L_0}}<\delta\},
\end{equation*}
where $\Phi_1^{(k_0,L_0)}$ denotes the KdV cnoidal wave profile with speed $c_{KdV}(k_0,L_0)=1$ that solves the integrated KdV travelling wave equation
\begin{equation} \label{equ:assocKdV}
-\Phi+\Phi''+6\frac{V'''(0)}{V''(0)}\Phi^2=0.
\end{equation}
Explicitly, $\Phi_1^{(k_0,L_0)}$ is given by
\begin{equation}
\Phi_1^{(k_0,L_0)}(\xi)=\frac{V''(0)}{V'''(0)}\frac{K^2(k_0)}{L_0^2}\left( \frac{1-2k_0^2+\sqrt{1-k_0^2+k_0^4}}{3}+k_0^2\cn^2\left(\frac{K(k_0)}{L_0}\xi ;k_0^2 \right)\right),
\end{equation}
Here $\cn$ denotes one of the Jacobian elliptic functions (see Section~\ref{sec:KdVsolutions} for its precise definition).
\item[(b)] (asymptotic shape) The solution $r_c$ from (a) satisfies the estimate
\begin{equation} \label{shape}
\left\| \frac{1}{\eps^2}r_c\left( \frac{.}{\eps} \right)-\Phi_1^{(k_0,L_0)}\right\|_{H^1_{2L_0}}\leq C\eps^2,
\end{equation}
where $C$ is independent of $\eps$.
\item[(c)] (smoothness) The mapping $c\mapsto r_c$ from $(c_s,\infty)$ into $H_{2L_0}^1$ is $C^1$.
\end{itemize}
\end{theorem}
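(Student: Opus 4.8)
The plan is to pass to the rescaled profile $\Phi := \eps^{-2}r_c(\eps^{-1}\,\cdot\,)$, which is $2L_0$-periodic, and to recast the travelling-wave equation \eref{equ:equofmot2} as a fixed-point equation $\Phi = \mathcal{F}(\Phi,\eps)$ in $H^1_{2L_0}$ whose solution at $\eps=0$ is the KdV profile $\Phi_1^{(k_0,L_0)}$. Writing $V'(r)=c_s^2 r + N(r)$ with $N(r)=V'(r)-V''(0)r=O(r^2)$ and taking Fourier series in \eref{equ:equofmot2}, the linear part inverts to $\what{r}_c(m)=a_c(m\pi/L)\,\what{N(r_c)}(m)$ with $a_c(k)=4\sin^2(k/2)/(c^2 k^2-4c_s^2\sin^2(k/2))$. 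Because $4\sin^2(k/2)\le k^2$, the denominator is bounded below by $(c^2-c_s^2)k^2=c_s^2(\eps^2/12)k^2>0$ for $m\neq 0$ (the zero mode being included by continuous extension), so the \emph{renormalized} multiplier $b_\eps(\kappa):=(c_s^2\eps^2/12)\,a_c(\eps\kappa)$ is well defined, satisfies $0\le b_\eps\le 1$ uniformly, and converges to $(1+\kappa^2)^{-1}$ as $\eps\to 0$. Inserting $\what{r}_c(m)=\eps^2\what{\Phi}(m)$, $k=\eps\kappa_m$ with $\kappa_m=m\pi/L_0$, and dividing by $\eps^2$, the fixed-point equation at $\eps=0$ becomes
\be
\Phi=\frac{6V'''(0)}{V''(0)}\,(1-\partial_\xi^2)^{-1}\big[\Phi^2\big],
\ee
which is exactly the integrated KdV travelling-wave equation \eref{equ:assocKdV}; $\Phi_1^{(k_0,L_0)}$ solves it by construction. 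This renormalization is adapted from \cite{friesecke1999solitary}, and is in fact simpler here because periodicity makes the frequencies discrete and $(1-\partial_\xi^2)^{-1}$ compact on $H^1_{2L_0}$.

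The decisive step is to show that the Fréchet derivative at $\Phi_1^{(k_0,L_0)}$ of the map $\Phi\mapsto\Phi-\mathcal{F}(\Phi,0)$, namely $\mathcal{L}:=I-K$ with $K\psi=\tfrac{12V'''(0)}{V''(0)}(1-\partial_\xi^2)^{-1}[\Phi_1\psi]$, is an isomorphism of the even subspace of $H^1_{2L_0}$. Since $(1-\partial_\xi^2)^{-1}$ is compact on $H^1_{2L_0}$ and $\Phi_1$ is bounded, $K$ is compact, so $\mathcal{L}$ is Fredholm of index zero and it suffices to prove its kernel trivial. A function $\psi\in\ker\mathcal{L}$ satisfies $-\psi''+(1-\tfrac{12V'''(0)}{V''(0)}\Phi_1)\psi=0$; substituting $\Phi_1\propto\mathrm{const}+k_0^2\cn^2$ and rescaling the variable turns this into Lamé's equation $\psi''+(h-n(n+1)k_0^2\sn^2)\psi=0$ with $n(n+1)=12$, i.e.\ the $n=3$ Lamé operator, whose band structure is explicit (the subject of~\ref{app:Lame}). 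By translation invariance $\Phi_1'\in\ker\mathcal{L}$ on the full space of $2L_0$-periodic functions, and since $\Phi_1$ is even, $\Phi_1'\propto\sn\cn\dn$ is \emph{odd}. The crux --- and what I expect to be the main obstacle --- is to show that $0$ is a \emph{simple} band edge with this odd eigenfunction, so that $\ker\mathcal{L}$ contains no even function. For $n=3$ and any fixed $k_0\in(0,1)$ all three Lamé gaps are open, so the seven band edges sit at distinct energies; identifying $\sn\cn\dn$ as the Lamé polynomial at one of them shows its eigenvalue is simple, whence $\mathcal{L}$ restricted to even functions has trivial kernel and is invertible.

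With this isomorphism, the implicit function theorem applied on the closed even subspace of $H^1_{2L_0}$ yields $\eps_0>0$ and a branch $\eps\mapsto\Phi(\eps)$, $C^1$ in $\eps$ (as smooth as $V$ permits), with $\Phi(0)=\Phi_1^{(k_0,L_0)}$ and $\Phi(\eps)-\mathcal{F}(\Phi(\eps),\eps)=0$, unique in the prescribed $\delta$-neighbourhood; undoing the rescaling produces $r_c$ together with its local uniqueness, giving (a). For the shape estimate (b) I would exploit that $\mathcal{F}(\cdot,\eps)$ is \emph{even in} $\eps$: the symbol $b_\eps$ and $c^2=c_s^2(1+\eps^2/12)$ depend on $\eps$ only through $\eps^2$, and the Taylor expansion of $V'(\eps^2\Phi)$ contains only even powers of $\eps$. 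Hence $\partial_\eps\mathcal{F}(\Phi_1,0)=0$, so $\Phi'(0)=0$ and $\|\Phi(\eps)-\Phi_1^{(k_0,L_0)}\|_{H^1_{2L_0}}=O(\eps^2)$, which is \eref{shape}. Finally, for (c) the map $c\mapsto\eps=\sqrt{12(c^2/c_s^2-1)}$ is a smooth bijection of $(c_s,\infty)$ with nonvanishing derivative, and $\eps\mapsto\Phi(\eps)$ is $C^1$; composing with the scaling that recovers $r_c$ gives the claimed $C^1$ dependence $c\mapsto r_c$.
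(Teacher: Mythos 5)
Your treatment of part (a) follows the same architecture as the paper's proof: the renormalized fixed-point equation $\Phi=P^{(\eps)}N^{(\eps)}(\Phi)$, compactness of the linearization, and triviality of the kernel of $I-L$ on even functions via the $n=3$ Lam\'e equation (open gaps, $\sn\,\cn\,\dn$ as odd eigenfunction at a simple band edge) --- this is precisely Sections~\ref{sec:renorm} and~\ref{sec:existence} together with~\ref{app:Lame}. However, you omit the analytic core of the continuum limit: the uniform, quantified multiplier estimate $\sup_{s\in\R}\vert\tilde{p}^{(\eps)}(s)-\tilde{p}^{(0)}(s)\vert\leq C\eps^2$ of Lemma~\ref{lem:pseudoestimate}, which rests on the complex pole analysis of \cite{friesecke1999solitary}. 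Pointwise convergence of your symbol $b_\eps$ together with $0\leq b_\eps\leq 1$ is not sufficient: on $H^1_{2L_0}$ the operator norm of a Fourier multiplier equals the supremum of its symbol over the lattice frequencies $m\pi/L_0$, $m\in\Z$, and in the rescaled variables these sweep out arbitrarily large arguments $\eps m\pi/(2L_0)$; hence operator-norm closeness of $P^{(\eps)}$ to $P^{(0)}$ --- which even the bare existence step of your implicit function theorem requires --- is exactly a uniform statement about the symbols, and one with rate $\eps^2$ if it is to serve part (b).

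Parts (b) and (c) as written would fail. For (b): evenness of $\mathcal{F}$ in $\eps$ and $\Phi'(0)=0$ only give $\Vert\Phi(\eps)-\Phi_1\Vert=o(\eps)$; to upgrade this to $O(\eps^2)$ you need the branch to be $C^2$ (or at least $C^{1,1}$) in $\eps$, hence $\mathcal{F}$ twice norm-differentiable in $\eps$. This is neither proven nor available under the hypothesis $V\in C^4$: there $\eta\in C^1$ only, so $\eps\mapsto N^{(\eps)}(\Phi)$ is merely $C^1$, and norm-differentiability of $\eps\mapsto P^{(\eps)}$ is nowhere established. The paper avoids all differentiability in $\eps$: it derives (a) and (b) simultaneously from the quantitative implicit function theorem (Lemma~\ref{lem:IFT}), which bounds the distance of the perturbed fixed point by $(1-\theta)^{-1}C_0\Vert\tilde{F}(\Phi_1)\Vert$, and this quantity is $O(\eps^2)$ by Lemma~\ref{lem:pseudoestimate}. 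For (c): you cannot simply compose $c\mapsto\eps\mapsto\Phi(\eps)\mapsto r_c$. Besides the unproven norm-differentiability of $\eps\mapsto P^{(\eps)}$, the dilation $\eps\mapsto\eps^2\Phi^{(\eps)}(\eps\,\cdot)$ changes the period of $r_c$, so the family does not lie in a single periodic Sobolev space, and formal differentiation produces the non-periodic term $x\,(\Phi^{(\eps)})'(\eps x)$ (and loses a derivative). The paper circumvents both obstacles by a reparametrization: freeze $\eps$, vary $(k,L)$ --- in these parameters the symbol \eref{equ:pseudosymbol} is manifestly smooth --- obtain a $C^1$ curve $(k,L)\mapsto\Phi^{(\eps,k,L)}$ from the standard implicit function theorem, and then use the scaling identity $c_{KdV}^{-1}\Phi^{(\eps,k,L)}(c_{KdV}^{-1/2}\,\cdot)=\Phi^{(\eps_1,k,L)}$ with $\eps_1=\sqrt{c_{KdV}}\,\eps$ to convert back to the $c$-parametrization. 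Some version of this step (or a direct proof of smoothness of the symbol family in $\eps$) is indispensable for (c).
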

Estimate (\ref{shape}) means that the wave profile $r_c$ that solves \eref{equ:equofmot2} has a characteristic period of order $1/\eps$ and amplitude of order $\eps^2$.

\section{Renormalization and the continuum limit} \label{sec:renorm}

The formal argument in Remark~\ref{rem:contlimes} which relates the lattice equation to the KdV equation is not mathematically satisfactory. This is because it involves
uncontrolled truncation of a Taylor expansion of a difference operator into (more and more unbounded) difference operators. To overcome this problem, Friesecke and Pego
\cite{friesecke1999solitary} introduced -- in the context of spatially localized waves -- a renormalization approach which recasts both the lattice travelling wave equation \eref{equ:equofmot2} and the limiting
KdV travelling wave equation in a form involving only {\it bounded} operators (see \eref{equ:fixedpointr} below). The ensuing lattice Fourier multiplication operator governing lattice waves and its limiting continuum counterpart can then be shown, via a careful analysis of the location of their complex
poles, to be rigorously close in an appropriate operator norm. 
As we show in this section, this approach can be adapted to the periodic setting.
However, as we will see in Section~\ref{sec:KdVsolutions}, the limiting KdV equation (\ref{equ:assocKdV}) becomes more subtle in the periodic setting, admitting a two-parameter family of
solutions in place of the one-parameter soliton family.

We now derive the renormalized form of (\ref{equ:equofmot2}) and its small-amplitude long-wave limit, by adapting the analogous steps in \cite{friesecke1999solitary} to the periodic case.

Let us make the assumptions \eref{equ:hypopotential} on the nearest-neighbour potential $V$. We isolate the nonlinear from the linear part of the restoring force by writing
\begin{equation*}
V'(r)=V''(0)r+N(r), \qquad N(r)=\frac{1}{2}V'''(0)r^2\big(1+\eta (r)\big),
\end{equation*}
where $\eta(r)\in C^1$ with $\eta(0)=0$. 
Let us consider functions $r_c$ in the periodic Sobolev space $H^1_{2L}(\R)$ and their corresponding Fourier transform \eref{equ:Fouriertrafo}. Then \eref{equ:equofmot2} turns into the equation
\begin{equation*}
\Big( c^2\big( \frac{m\pi}{L}\big)^2-4V''(0)\sin^2\big( \frac{m\pi}{2L}\big) \Big)\what{r}_c(m)=4\sin ^2\big( \frac{m\pi}{2L}\big)\what{N(r_c)}(m).
\end{equation*}
Since $c^2\big( \frac{m\pi}{L}\big)^2-4V''(0)\sin^2\big( \frac{m\pi}{2L}\big)>0$ for $c^2>1$ and $m\in\Z\setminus \{0\}$, this equation can be transformed into the fixed point equation
\begin{equation} \label{equ:fixedpointr}
r_c=P \ N(r_c),
\end{equation}
where $P$ denotes the pseudo-differential operator $\widehat{Pr}(m)=p(m)\hat{r}(m)$ with symbol
\begin{equation} \label{equ:pseudosymbol}
p(m)=\frac{4\sin^2\big( \frac{m\pi}{2L}\big)}{c^2\big(\frac{m\pi}{L}\big)^2-4V''(0)\sin^2\big( \frac{m\pi}{2L}\big)}=\frac{\sinc^2\big(\frac{m\pi}{2L}\big)}{c^2-V''(0)\sinc^2\big(\frac{m\pi}{2L}\big)}.
\end{equation}
Here $\sinc(x)=\sin(x)/x$ denotes the usual sinc function.

Let us introduce a small parameter $\eps>0$ by setting
\begin{equation} \label{equ:scalingc}
c=\sqrt{V''(0)}\left( 1+\frac{\eps^2c_{KdV}}{24}\right), \quad \textrm{with } c_{KdV}=\frac{4K^2(k)}{L^2}\sqrt{1-k^2+k^4},
\end{equation}
where $k^2\in (0,1)$ and $L\in\R^+$ are fixed real constants.
By rescaling variables via
\begin{equation} \label{equ:renorm}
\Phi^{(\eps,k,L)}(x)=\frac{1}{\eps^2}r_c\big( \frac{x}{\eps}\big)
\end{equation}
the renormalization equation \eref{equ:fixedpointr} turns into an ($\eps$-dependent) fixed point equation for $\Phi^{(\eps,k,L)}$:
\begin{equation} \label{equ:fixedpointphieps}
\Phi=P^{(\eps)}N^{(\eps)}(\Phi),
\end{equation}
where $N^{(\eps)}(\Phi)=\frac{1}{2}V'''(0)\Phi^2\big( 1+\eta (\eps^2\Phi)\big)$ and the operator $P^{(\eps)}$ has the symbol
\begin{equation} \label{equ:symbolpeps}
p^{(\eps)}(m)=\frac{\eps^2\sinc^2\big(\eps \frac{m\pi}{2L}\big)}{c^2-V''(0)\sinc^2\big(\eps \frac{m\pi}{2L}\big)}.
\end{equation}

Next, we recall from \cite{friesecke1999solitary} how the system \eref{equ:fixedpointphieps}, (\ref{equ:symbolpeps}) formally reduces to KdV as $\eps\to 0$.
In the small $\eps$ regime the wave speed scaling \eref{equ:scalingc} implies
\begin{equation} \label{equ:scalingcnew}
c^2=V''(0)\left( 1+\frac{\eps^2c_{KdV}}{12}\right) + O(\eps^4).
\end{equation}
From this and the Taylor expansion $\sinc^2(x)=1-x^2/3+O(x^4)$ one obtains for the pointwise limit of of the symbol as $\eps\to 0$:
\begin{equation} \label{equ:symbolp0}
\lim_{\eps\to 0}p^{(\eps)}(m)=p^{(0)}(m)=\frac{12}{V''(0)}\frac{1}{ c_{KdV}+\big(\frac{m\pi}{L}\big)^2}.
\end{equation}
The nonlinearity satisfies $\lim_{\eps\to 0}N^{(\eps)}(\Phi)=\frac12 V'''(0)\Phi^2$.
Hence as $\eps\to 0$ the fixed point equation \eref{equ:fixedpointphieps} converges to the equation 
\begin{equation} \label{equ:fixedpointphi}
\Phi=P^{(0)}N^{(0)}(\Phi)=6\frac{V'''(0)}{V''(0)}(c_{KdV}-\partial^2)^{-1}\Phi^2.
\end{equation}
By applying the operator $(c_{KdV}-\partial^2)$ to both sides, we see that this equation is equivalent to the integrated KdV equation
\begin{equation} \label{equ:intKdVPhi}
-c_{KdV}\Phi+6\frac{V'''(0)}{V''(0)}\Phi^2+\Phi''=0,
\end{equation}
which stems from integrating the usual KdV travelling wave equation 
\begin{equation}\label{equ:nonintKdVPhi}
-c_{KdV}\Phi'+12\frac{V'''(0)}{V''(0)}\Phi \Phi'+\Phi'''=0, 
\end{equation}
with vanishing integration constant. 

A key advantage of the above analysis over the formal asymptotics in Section~\ref{sec:result} is the
fact, shown in \cite{friesecke1999solitary}, that the convergence in (\ref{equ:symbolp0}) is in fact uniform.
This will allow us to show that the
multipliers $P^{(\eps)}$ and $P^{(0)}$ are close in operator norm on the periodic Sobolev space $H^1_{2L}(\R)$  introduced in the previous section.
It is useful to introduce the notation
\begin{equation} \label{equ:tildenotation}
  p^{(\eps)}(m)=:\tilde{p}^{(\eps)}(\frac{m\pi}{L}), \;\; p^{(0)}(m)=:\tilde{p}^{(0)}(\frac{m\pi}{L});
\end{equation}
the $L$-independent functions $\tilde{p}^{(\eps)}$ and $\tilde{p}^{(0)}$ coincide with those considered in \cite{friesecke1999solitary}. 

\begin{lemma} (Fourier multiplier estimate) \label{lem:pseudoestimate}
Fix a point $(k_0,L_0)\in (0,1)\times \R^+$ such that $c_{KdV}(k_0,L_0)= 1$. Then there exists a constant $C>0$ such that for all 
sufficiently small $\eps>0$ and $c^2=V''(0)(1+\eps^2/12)$ the Fourier multipliers defined in \eref{equ:symbolpeps}, \eref{equ:symbolp0}, 
\eref{equ:tildenotation} satisfy
$$
a) \;\;\;\;\; \sup_{s\in\R} \vert\tilde{p}^{(\eps)}(s)-\tilde{p}^{(0)}(s)\vert\leq C \eps^2.
$$
$$
b) \;\;\;\;\; \| P^{(\eps)}-P^{(0)}\| _{\mathcal{L}(H_{2L_0}^1)} \leq C\eps^2,
$$
where $\mathcal{L}(X)$ denotes the Banach space of bounded linear operators from the Banach space $X$ into itself.
\end{lemma}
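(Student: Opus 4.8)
The plan is to deduce the operator-norm bound (b) directly from the pointwise bound (a) by exploiting the diagonal (Fourier-multiplier) structure of the operators on the weighted space $H^1_{2L_0}$, and to prove (a) by analysing the two explicit symbols after passing to the variable $t=\eps s/2$ that is natural for $\sinc$.

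First I would dispose of (b) assuming (a). The operator $P^{(\eps)}-P^{(0)}$ multiplies the $m$-th Fourier coefficient by $q(m):=p^{(\eps)}(m)-p^{(0)}(m)$. Since the $H^1_{2L_0}$-norm is the $\ell^2$-norm weighted by $1+(m\pi/L_0)^2$ and the multiplier commutes with these weights, one obtains
\[
\|(P^{(\eps)}-P^{(0)})\phi\|_{H^1_{2L_0}}\le\Big(\sup_{m\in\Z}|q(m)|\Big)\|\phi\|_{H^1_{2L_0}},
\]
hence $\|P^{(\eps)}-P^{(0)}\|_{\mathcal{L}(H^1_{2L_0})}\le\sup_{m\in\Z}|q(m)|$. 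Because $q(m)=\tilde p^{(\eps)}(m\pi/L_0)-\tilde p^{(0)}(m\pi/L_0)$ merely samples the $L$-independent profiles at the frequencies $s=m\pi/L_0$, this is dominated by $\sup_{s\in\R}|\tilde p^{(\eps)}(s)-\tilde p^{(0)}(s)|$, so (b) follows from (a) with the same constant $C$.

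The analytic core is therefore (a). Inserting $c^2=V''(0)(1+\eps^2/12)$, writing $v=V''(0)$ and $t=\eps s/2$, both symbols become $\eps^2/v$ times a bounded profile,
\[
\tilde p^{(\eps)}(s)=\frac{\eps^2}{v}\,\frac{\sinc^2 t}{(1-\sinc^2 t)+\eps^2/12},\qquad
\tilde p^{(0)}(s)=\frac{\eps^2}{v}\,\frac{1}{t^2/3+\eps^2/12},
\]
so it suffices to bound the difference $D(t,\eps)$ of the two profiles uniformly in $t\in\R$ and small $\eps$, which yields $|\tilde p^{(\eps)}-\tilde p^{(0)}|\le(C/v)\eps^2$. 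I would split the range of $t$: for $|t|\le\delta$ the expansion $1-\sinc^2 t=t^2/3+O(t^4)$ makes the two denominators agree up to $O(t^4)$, and a short computation using the common lower bound $t^2/3+\eps^2/12\ge t^2/3$ on both denominators shows that the leading singular parts cancel and $|D|=O(1)$; for $|t|\ge\delta$ one uses the uniform positivity $1-\sinc^2 t\ge 1-\sinc^2\delta>0$ together with $t^2/3\ge\delta^2/3$ to bound each profile separately. This reproduces the uniform-convergence fact already established in \cite{friesecke1999solitary}, whose $L$-independent multipliers coincide with $\tilde p^{(\eps)}$ and $\tilde p^{(0)}$.

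The main obstacle is precisely this uniformity in (a) over the unbounded frequency range. The naive route of Taylor-expanding $\sinc^2(\eps s/2)$ in $s$ and estimating the remainder fails, since the remainder $O((\eps s)^4)$ is not small for large $s$; the remedy is to work in $t=\eps s/2$ and to separate the small-$t$ regime, where the cancellation of the leading singular parts must be made quantitative, from the large-$t$ regime, where uniform positivity of $1-\sinc^2 t$ away from the origin keeps both profiles bounded. Once (a) is available, (b) carries no additional analytic difficulty.
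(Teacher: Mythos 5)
Your part (b) coincides with the paper's argument: both use that a Fourier multiplier acts diagonally with respect to the weighted $\ell^2$-structure of $H^1_{2L_0}$, so its operator norm is the supremum of the symbol over the lattice frequencies $s\in\pi\Z/L_0$, which is then dominated by the supremum over all $s\in\R$, i.e.\ by part (a). The genuine difference is in part (a): the paper does not prove it, but quotes Lemma~3.1 of \cite{friesecke1999solitary}, whose proof proceeds by a careful analysis of the location of the poles of $\tilde{p}^{(\eps)}$ in the complex plane; you instead give a self-contained real-variable proof, substituting $t=\eps s/2$, factoring $\eps^2/V''(0)$ out of both symbols, and splitting into $|t|\le\delta$, where the singular parts cancel, and $|t|\ge\delta$, where $1-\sinc^2 t\ge 1-\sinc^2\delta>0$ and $t^2/3\ge\delta^2/3$ bound each profile separately. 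This elementary route is adequate for the sup-norm estimate needed here (the complex-analytic machinery of \cite{friesecke1999solitary} yields additional information, e.g.\ on analyticity and decay, that this lemma does not require), so your proof is a genuinely different and more transparent alternative. One small repair is needed in your small-$t$ step: the asserted common lower bound $(1-\sinc^2 t)+\eps^2/12\ge t^2/3$ is false, since $1-\sinc^2 t=t^2/3-\tfrac{2}{45}t^4+O(t^6)<t^2/3$ for small $t\ne 0$, and when $t^4\gg\eps^2$ the term $\eps^2/12$ cannot compensate the deficit. The fix is immediate: for $|t|\le\delta$ with $\delta$ small one has $1-\sinc^2 t\ge t^2/6$, and writing the difference of the two profiles as $\frac{\sinc^2 t-1}{A}+\frac{B-A}{AB}$ with $A=(1-\sinc^2 t)+\eps^2/12$ and $B=t^2/3+\eps^2/12$ gives a first term of modulus at most $1$ and a second term of modulus at most $C_1t^4/(t^4/18)=18C_1$ (using $|B-A|\le C_1t^4$, $A\ge t^2/6$, $B\ge t^2/3$), uniformly in $\eps$; so your conclusion that the profile difference is $O(1)$, hence (a), stands.
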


\begin{proof}
a) was proved in \cite[Lemma~3.1]{friesecke1999solitary}, by careful analysis of the location of the poles of $\tilde{p}^{(\eps)}$ in the complex plane.
To show b), note that, for any operator on $H^1_{2L}(\R)$ of form $\widehat{Ar}(m)=\tilde{a}(m\pi/L)\hat{r}(m)$, 
$$
   \| A \|^2_{L(H^1_{2L}} = \sup_{r\in H^1_{2L}\backslash\{0\} } \frac{ 2 L \sum_{m\in\Z} (1+(m\pi/L)^2)|\tilde{a}(m\pi/L)\hat{r}(m)|^2}
                                       { 2 L \sum_{m\in\Z} (1+(m\pi/L)^2)|\hat{r}(m)|^2} 
                          = \sup_{s\in\pi\Z/L} |\tilde{a}(s)|^2.
$$
Applying this with $A=P^{(\eps)}-P^{(0)}$, $\tilde{a}=\tilde{p}^{(\eps)}-\tilde{p}^{(0)}$ yields 
$$
    \| P^{(\eps)}-P^{(0)}\| _{\mathcal{L}(H_{2L_0}^1)} = \sup_{s\in\pi\Z/L} |\tilde{p}^{(\eps)} - \tilde{p}^{(0)}|. 
$$
Estimating the above supremum by that over $s\in\R$ and applying a) yields the assertion. 
\end{proof}

\section{Periodic KdV travelling wave solutions} \label{sec:KdVsolutions}
In this section we explicitly describe the periodic travelling wave solutions of the (integrated) KdV equation \eref{equ:intKdVPhi}, see also \cite{drazin1989solitons,pava2006stability}. Here a main difference from the solitary wave case is present. Mathematically, there arises 
a non-zero integration constant after multiplying the KdV equation \eref{equ:assocKdV} by $\Phi'$ and integrating. This forces us to deal with a 2-parameter system (for instance represented physically by the speed and period of the wavetrain) in contrast to the soliton case, which is fully described by one parameter corresponding to the velocity (or equivalently the amplitude or width) of the single soliton. 
The class of periodic solutions contains soliton solutions as a degenerate limit, namely infinite period.

By multiplying the (integrated) KdV equation \eref{equ:intKdVPhi} with $\Phi'$ and integrating once more, we get
\begin{equation} \label{equ:intKdVtwice}
(\Phi')^2=4\frac{V'''(0)}{V''(0)}\left( -\Phi^3+\frac{V''(0)}{4V'''(0)}c_{KdV} \Phi^2+\frac{V''(0)}{2V'''(0)}B_{\Phi}\right).
\end{equation}
Let us denote the polynomial in the bracket on the right hand side by $F(\Phi)$. For finding periodic solutions of \eref{equ:intKdVtwice} it is useful to consider the roots of $F(\Phi)$. The function can by factorized by
\begin{equation*}
F(\Phi)=-(\Phi-E_1)(\Phi-E_2)(\Phi-E_3),
\end{equation*}
where $E_1$, $E_2$, and $E_3$ denoted the three roots.
It turns out that real periodic solutions occur if the three zeros of the polynomial $F(\Phi)$ are real and distinct, i.e.~$E_1<E_2<E_3$.
The dependence of the zeros on the parameters $c_{KdV}$ and $B_{\Phi}$ is thus given by
\begin{eqnarray} 
&&E_1+E_2+E_3=\frac{V''(0)}{4V'''(0)}c_{KdV}, \label{equ:system1}\\
&&E_1E_2+E_2E_3+E_1E_3=0,\label{equ:system2}\\
&&E_1E_2E_3=\frac{V''(0)}{2V'''(0)}B_{\Phi}.\label{equ:system3}
\end{eqnarray}
The real and bounded solutions are contained in the region $[E_2,E_3]$, whereas in the region $(-\infty,E_1]$ the solutions are unbounded. Introduce the normalized variable $\rho=\Phi/E_3$, then \eref{equ:intKdVtwice} becomes
\begin{equation} \label{equ:intKdVtwicerho}
(\rho')^2=-4\frac{V'''(0)}{V''(0)}E_3(\rho-\rho_1)(\rho-\rho_2)(\rho-1),
\end{equation}
where we have set $\rho_i=E_i/E_3$ for $i=2,3$. The new variable $\rho$ takes values in the interval $[\rho_2,1]$. Assume that a maximum value of $\rho$ is at $z=0$ which can always be achieved by a coordinate translation. Since the critical points of $\rho\in [\rho_2,1]$ are at the boundary of the interval we conclude $\rho(0)=1$. Moreover, we have $\rho''(\rho_2)>0$ and $\rho''(1)<0$. Hence a bounded solution $\rho$ oscillates between the values $\rho_2$ and $1$.

In order to construct such a solution let us again make a change of variables by defining $\psi$ implicitly via
\begin{equation*}
\rho=1+(\rho_2-1)\sin^2\psi
\end{equation*} 
where $\psi$ is continuous with $\psi(0)=0$. With this definition \eref{equ:intKdVtwicerho} becomes 
\begin{equation} \label{equ:intKdVtwicepsi}
(\psi')^2=\frac{V'''(0)}{V''(0)}E_3(1-\rho_1)\big( 1-\frac{1-\rho_2}{1-\rho_1}\sin^2\psi\big).
\end{equation}
We want to use the standard notation in this context and therefore define the parameters
\begin{equation*}
k^2=\frac{1-\rho_2}{1-\rho_1}, \qquad \lambda=\frac{V'''(0)}{V''(0)}E_3(1-\rho_1),
\end{equation*}
which fulfill $k^2\in (0,1)$ and $\lambda>0$.
Using this notation \eref{equ:intKdVtwicepsi} can be written in the form
\begin{equation*}
(\psi')^2=\lambda (1-k^2\sin^2\psi),
\end{equation*}
which can be solved implicitly by integration
\begin{equation} \label{equ:solellint}
I(\psi;k^2):=\int_0^{\psi(\xi)}\frac{ds}{\sqrt{1-k^2\sin^2s}}=\sqrt{\lambda}\xi+\xi_0,
\end{equation}
where $\xi_0$ is a final arbitrary constant of integration. Here $I(\psi;k^2)$ is called the standard elliptic integral of the first kind and $k^2$ the elliptic modulus. The Jacobian elliptic function $\sn$ is defined as the inverse of the function $\psi \mapsto I(\psi;k^2)$ via $\sin\psi=\sn(\sqrt{\lambda}\xi+\xi_0;k^2)$. Another basic elliptic function, the cnoidal function $\cn$, is defined in terms of $\sn$ via $\cn(z ;k^2)=\sqrt{1-\sn^2(z ;k^2)}$. Thus
\begin{equation*}
\rho=1+(\rho_2-1)\sn^2(\sqrt{\lambda}\xi;k^2)=\rho_2+(1-\rho_2)\cn^2(\sqrt{\lambda}\xi+\xi_0;k^2).
\end{equation*}
Hence by recasting variables appropriately the solution of \eref{equ:intKdVPhi} is finally given by the so-called cnoidal wave
\begin{equation} \label{equ:cnoidalsol}
\Phi(\xi)=E_2+(E_3-E_2)\cn^2 \Big( \sqrt{\frac{V'''(0)}{V''(0)}(E_3-E_1)}\xi+\xi_0; k^2\Big),
\end{equation}
where the travelling wave coordinate is
\begin{equation*} 
\xi=x-\frac{4V'''(0)}{V''(0)}(E_1+E_2+E_3)t,
\end{equation*}
and the elliptic modulus is given by
\begin{equation*}
k^2=\frac{E_3-E_2}{E_3-E_1}.
\end{equation*}
The solution \eref{equ:cnoidalsol} is periodic if $k^2\in (0,1)$ with period 
\begin{equation*}
2L=\frac{2\sqrt{V''(0)}K(k)}{\sqrt{V'''(0)(E_3-E_1)}},
\end{equation*}
where 
\begin{equation*}
K(k):=I(\pi/2;k^2)=\int_0^{\pi/2}\frac{ds}{\sqrt{1-k^2\sin^2s}}
\end{equation*}
is called the complete elliptic integral of the first kind.

It is worthwhile to consider two limiting cases of these KdV cnoidal wave solutions.
\begin{itemize}
\item[(i)] \emph{The "most nonlinear" limit}:  In the case $E_2\to E_1$ (corresponding to $k^2\to 1$) we have $K(k)\to +\infty$ and $\cn(z ;k^2)\to\sech z$. Thus in this case the solution of \eref{equ:KdVtransfo} with $\xi_0=0$ corresponds to a single soliton solution 
\begin{equation*}
\Phi(\xi)=E_1+(E_3-E_1)\sech^2\Big( \sqrt{\frac{V'''(0)}{V''(0)}(E_3-E_1)}(x-c_{KdV}t)\Big),
\end{equation*}
where $c_{KdV}=4V'''(0)(2E_1+E_3)/V''(0)$. Due to the invariance of the KdV equation under Galilean transformations, i.e.~transformations of the form
\begin{equation*}
\Phi(\xi)\longrightarrow a+\Phi \Big(\xi-12a\frac{V'''(0)}{V''(0)}t\Big),
\end{equation*}
this solution is equivalent to the well-known standard form
\begin{equation} \label{equ:onesoliton}
\Phi(\xi)=\frac{V''(0)}{V'''(0)}\Big( \frac{\sqrt{\beta}}{2}\sech \big( \frac{\sqrt{\beta}}{2}(x-\beta t)\big)\Big)^2, 
\end{equation}
where $\beta=4V'''(0)(E_3-E_1)/V''(0)$.
\item[(ii)] \emph{The "linear" limit}: In the case $E_3\to E_2$ which corresponds to the limit $k^2\to 0$ (amplitude tending to zero) we have $K(k)\to \pi/2$ and $\cn(z;k^2)\to\cos z$. Therefore the limiting behaviour of \eref{equ:KdVtransfo} with $\xi_0=0$ as $E_3\to E_2$ is
\begin{equation} \label{equ:linearlimit}
\Phi(\xi)\approx \frac{E_3+E_2}{2}+\frac{(E_3-E_2)}{2}\cos\Big( \sqrt{\frac{4V'''(0)}{V''(0)}(E_3-E_1)}(x-c_{KdV}t)\Big),
\end{equation}
where $c_{KdV}=4V'''(0)(E_1+2E_2)/V''(0)$. Here we made use of $\cos^2(x)=(\cos(2x)+1)/2$. Thus in leading orders of this small amplitude limit the cnoidal wave describes a linear wave. 
\end{itemize}

From the considerations above we deduce that the periodic solutions of the type~\eref{equ:cnoidalsol} can be considered an intermediate form between linear waves and highly nonlinear solitons. 

By solving the KdV travelling wave equation \eref{equ:nonintKdVPhi} via integration one has in general to deal with a 3-parameter system, stemming from the cubic polynomial $F(\Phi)$. Thus the KdV solutions constructed in this way depend on 3 parameters, reflected by the roots $\{E_1,E_2,E_3\}$, the integration constants and velocity $\{c_{KdV},A_{\Phi},B_{\Phi}\}$, or elliptic modulus, period and velocity $\{k^2,2L,c_{KdV}\}$. Due to the choice of the renormalization ansatz \eref{equ:renorm} the governing equations \eref{equ:equofmot2} for the string are related to an integrated KdV equation \eref{equ:intKdVPhi} with zero integration constant, i.e.~$A_{\Phi}=0$. Hence the 3-parameter reduces to a 2-parameter system. We aim to admit periodic KdV travelling waves to our considerations; for that purpose it is essential that the second integration constant $B_{\Phi}$ is non-zero. Otherwise the system would include only one free parameter giving rise to soliton solutions.

The cnoidal wave solution \eref{equ:cnoidalsol} is given in terms of the roots $E_1$, $E_2$, and $E_3=E_3(E_1,E_2)$. In the following we derive an equivalent expression in the parameters $k$, $L$, and $c_{KdV}=c_{KdV}(k,L)$. Since the latter model directly the "physical" quantities like the shape, period, and velocity of the wavetrain this will help us to get a less abstract picture of the solutions we are dealing with. Besides it enables us to directly relate our spectral analysis results derived in Section~\ref{subsec:specanalysis} to studies on the Lam\'e equation, cf.~\ref{app:Lame}.

Let us start by making the ansatz $\Phi(\xi)=A+B\cn^2(D\xi;k^2)$ with $\xi=x-Ct$. Insert this expression into \eref{equ:intKdVPhi} and use the relations $\cn'x=-\cn x \ \sn x \ \dn x$, $\sn^2x+\cn^2x=1$, and $\dn^2x=1-k^2+k^2\cn^2x$. Then comparing coefficients yields the following system of equations
\begin{eqnarray}
6\frac{V'''(0)}{V''(0)}A^2-CA+2(1-k^2)BD^2=0,&& \label{equ:eqsys1}\\
C-12\frac{V'''(0)}{V''(0)}A+4(1-2k^2)D^2=0,&& \label{equ:eqsys2}\\
B-\frac{V''(0)}{V'''(0)}k^2D^2=0.&& \label{equ:eqsys3}
\end{eqnarray}
Since we assume $\Phi$ to be periodic with period $2L$ and $\cn^2$ is a $2K(k)$-periodic function we have $D=K(k)/L$. Hence from \eref{equ:eqsys3} we get
\begin{equation*}
B=\frac{V''(0)}{V'''(0)}\frac{K^2(k)k^2}{L^2}.
\end{equation*}
Moreover, from \eref{equ:eqsys2} one obtains
\begin{equation*}
C=12\frac{V'''(0)}{V''(0)}A-4(1-2k^2)\frac{K^2(k)}{L^2}.
\end{equation*}
Finally, inserting these formulas into \eref{equ:eqsys3} gives the following quadratic equation
\begin{equation*}
A^2-2(1-2k^2)\frac{V''(0)}{V'''(0)}\frac{K^2(k)}{3L^2}A-(k^2-k^4)\left(\frac{V''(0)}{V'''(0)}\right)^2\frac{K^4(k)}{3L^4}=0,
\end{equation*}
which has the positive solution
\begin{equation*}
A=\frac{V''(0)}{V'''(0)}\frac{K^2(k)}{3L^2}\left( 1-2k^2+\sqrt{1-k^2+k^4}\right).
\end{equation*}
Thus we can write the cnoidal wave solution \eref{equ:cnoidalsol} in the convenient form
\begin{equation} \label{equ:cnoidalsolkL}
\Phi(\xi)=\frac{V''(0)}{V'''(0)}\frac{K^2(k)}{L^2}\left( \frac{1-2k^2+\sqrt{1-k^2+k^4}}{3}+k^2\cn^2 \Big( \frac{K(k)}{L}\xi;k^2\Big)\right),
\end{equation}
where $\xi=x-c_{KdV}t$ is the travelling wave coordinate with the velocity
\begin{equation*}
c_{KdV}=\frac{4K^2(k)}{L^2}\sqrt{1-k^2+k^4}.
\end{equation*}
Note that for $k^2\in(0,1)$ we have $c_{KdV}\in (\pi^2/L^2,\infty)$. The representation \eref{equ:cnoidalsolkL} of the cnoidal wave solution is indeed equivalent to \eref{equ:cnoidalsol} if we choose 
\begin{eqnarray*}
&&E_1=\frac{V''(0)}{V'''(0)}\frac{K^2(k)}{3L^2}\big( -2+k^2+\sqrt{1-k^2+k^4}\big),\\
&&E_2=\frac{V''(0)}{V'''(0)}\frac{K^2(k)}{3L^2}\big( 1-2k^2+\sqrt{1-k^2+k^4}\big),\\
&&E_3=\frac{V''(0)}{V'''(0)}\frac{K^2(k)}{3L^2}\big( 1+k^2+\sqrt{1-k^2+k^4}\big).
\end{eqnarray*}

In Figure~\ref{fig:cnoidalLconst} cnoidal wave solutions of the form~\eref{equ:cnoidalsolkL} for different parameters $k^2$ and $c_{KdV}$ are shown, that is they are normalized to have the same period.

\begin{figure} 
\centering
\includegraphics[width=.8\textwidth]{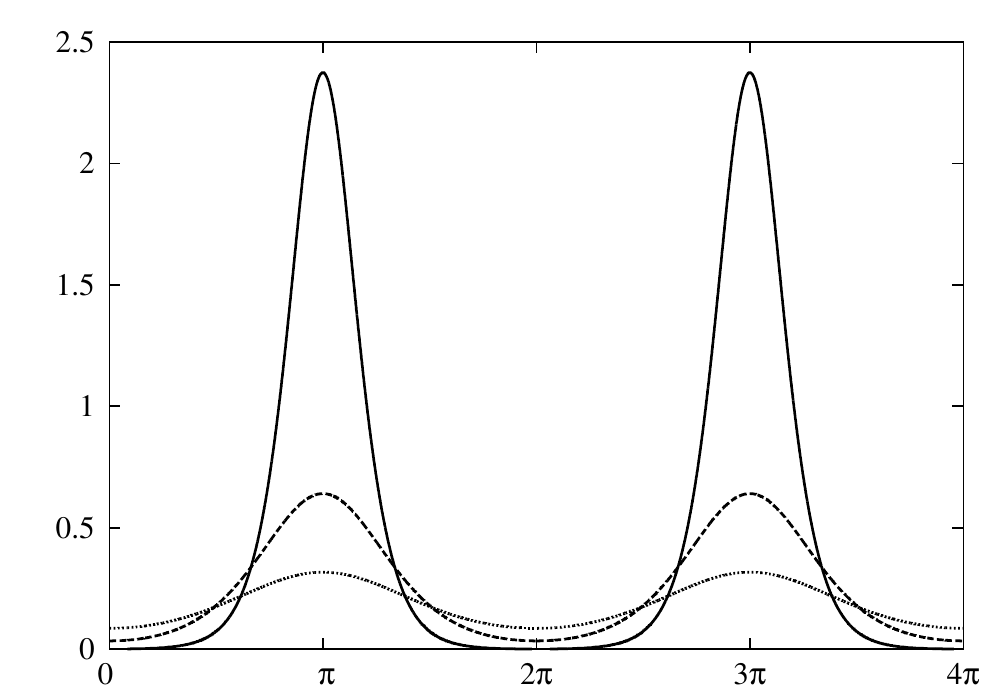}
\caption{Cnoidal wave solutions~\eref{equ:cnoidalsolkL} of \eref{equ:intKdVPhi} normalized to have the same period. $2L=2\pi$. We have set $V''(0)=V'''(0)=1$. The three curves represent the solutions corresponding to the different values for the elliptic modulus $k^2=0.6$ (dotted), $k^2=0.9$ (dashed), and $k^2=0.999$ (solid).\label{fig:cnoidalLconst}}
\end{figure}

\begin{remark}
Note that an equivalent renormalization approach would be the following: replace the renormalization ansatz \eref{equ:renorm} by $\phi^{(\eps,k,L)}(x)+E_2=\eps^{-2}r_c(\eps^{-1}x)$, where $E_2\neq 0$ denotes an arbitrary constant. In the limit $\eps\to 0$ this leads to an integrated KdV equation for $\phi$ given by 
\begin{equation*}
-\tilde{c}_{KdV}\phi+6\frac{V'''(0)}{V''(0)}\phi^2+\phi''+A_{\phi}=0,
\end{equation*}
where $A_{\phi}\neq 0$. Indeed, just set $\Phi=\phi+E_2$ in \eref{equ:intKdVPhi}. When integrating this equation once more assume the second integration constant $B_{\phi}$ to vanish. Then one of the zeros $e_1<e_2<e_3$ of the corresponding cubic polynomial has to vanish, say $e_2=0$. The cnoidal wave solution is given by
\begin{equation*}
\phi(\xi)=e_3\cn^2\left( \sqrt{V''(0)(e_3-e_1)/V'''(0)}\xi;k^2\right), \qquad \textrm{with }\xi=x-\tilde{c}_{KdV}t,
\end{equation*} 
where the velocity is $\tilde{c}_{KdV}=4V'''(0)(e_1+e_3)/V''(0)$. If we now add the constant $E_2$, that is if we make the transformation $\phi \to \phi +E_2$, one has to shift the velocity simultaneously, $\tilde{c}_{KdV} \to \tilde{c}_{KdV}+12V'''(0)E_2/V''(0)$, to obtain an equivalent solution (Galilean invariance). Indeed, this yields the solution $\Phi$ given by \eref{equ:cnoidalsol} with $E_i=e_i+E_2$ for $i=1,2,3$.
\end{remark}

\section{Existence and shape near the continuum limit} \label{sec:existence}

In this section we prove our main result, Theorem~\ref{thm:mainresult}. The technical ingredients of the proof are
\begin{itemize}
\item[(i)] the renormalized form of the lattice travelling wave equation and associated lattice Fourier multiplier estimate given in Section~\ref{sec:renorm},
\item[(ii)] spectral information about the linearization of \eref{equ:fixedpointr} at the KdV cnoidal wave (results from direct scattering theory for Schr\"odinger operators with KdV cnoidal wave-potential), and
\item[(iii)] a quantified version of the standard implicit function theorem borrowed from \cite{friesecke1999solitary}.
\end{itemize}
We start by fixing the wave speed $c_{KdV}$ of the KdV travelling wave profile by fixing the two parameters $k$ and $L$ of our system. We do so by choosing a point $k_0\in(0,1)$ and setting $L_0^2=4K^2(k_0)\sqrt{1-k_0^2+k_0^4}$ such that $c_{KdV}(k_0,L_0)= 1$.
That means, we fix the parameters $k$ and $L$ in \eref{equ:renorm} and \eref{equ:scalingcnew}. 

In Figure~\ref{fig:cnoidalcconst} cnoidal wave solutions of the form~\eref{equ:cnoidalsolkL} for different parameters $k^2$ are shown. By setting $L^2=4K^2(k)\sqrt{1-k^2+k^4}$ they are normalized to have the velocity $c_{KdV}=1$.

Next, we introduce
\begin{equation*}
F(\Phi)=P^{(0)}N^{(0)}(\Phi), \qquad \tilde{F}(\Phi)=P^{(\eps)}N^{(\eps)}(\Phi)-P^{(0)}N^{(0)}(\Phi).
\end{equation*}
The map $F$ is smooth on $H^1_{2L_0}$ and it has a fixed point $\Phi^{(k_0,L_0)}_1$ given by \eref{equ:cnoidalsolkL}. The derivative of $F$ at $\Phi^{(k_0,L_0)}_1$ is the operator $L=DF(\Phi^{(k_0,L_0)}_1)$ defined by
\begin{equation} \label{equ:linearizationop}
L\psi=12\frac{V'''(0)}{V''(0)}\frac{\Phi^{(k_0,L_0)}_1}{c_{KdV}-\partial^2}\psi.
\end{equation}
We will solve the equation $\Phi-F(\Phi)-\tilde{F}(\Phi)=0$ by the inverse function theorem, using that (i) $\tilde{F}$ is small in a suitable sense for small $\eps$ (using the estimate of Lemma~\ref{lem:pseudoestimate}) and that (ii) the linearization of the map $\Phi-F(\Phi)-\tilde{F}(\Phi)$ at the fixed point $\Phi^{(k_0,L_0)}_1$ for $\eps=0$ is $I-L$, which has a bounded inverse on the space of even $2L_0$-periodic functions $E_{2L_0}^1=\{ \Phi\in H_{2L_0}^1 | \Phi \textrm{ even}\}$, cf.~Lemma~\ref{lem:specanalysis}.

\subsection{Spectral analysis} \label{subsec:specanalysis}

To apply the implicit function theorem we have to establish spectral properties of the operator $F(\Phi)=P^{(0)}N^{(0)}(\Phi)$ linearized at a KdV cnoidal wave solution $\Phi^{(k,L)}_1$. For our purpose it suffices to show that the eigenvalue $\lambda=1$ is simple with corresponding odd $2L$-periodic eigenfunction $\frac{d}{d\xi}\Phi^{(k,L)}_1$, such that $I-DF(\Phi^{(k,L)}_1)$ has a bounded inverse on the space of even $2L$-periodic functions.

It turns out that the spectral problem for $L=DF(\Phi^{(k,L)}_1)$ is connected with the scattering theory for Schr\"odinger operators with cnoidal wave potential. There exists a large literature on this eigenvalue equation, known as the Lam\'e equation. Therefore, an alternative way to obtain the below result via the theory of the Lam\'e equation is given in~\ref{app:Lame}.

\begin{lemma} \label{lem:specanalysis}
Fix a point $(k_0,L_0)\in (0,1)\times \R^+$ such that $c_{KdV}(k_0,L_0)= 1$. Then the operator $L=DF(\Phi^{(k_0,L_0)}_1)$ given by \eref{equ:linearizationop} is a compact self-adjoint operator on $H^1_{2L_0}(\R)$. The eigenvalues $\lambda_1=1$ and $\lambda_2=2$ are simple with corresponding eigenfunctions $\psi_1(\xi)=\frac{d}{d\xi}\Phi^{(k_0,L_0)}_1(\xi)$ and $\psi_2(\xi)=\Phi^{(k_0,L_0)}_1(\xi)$, respectively. Hence the operator $I-L$ has a bounded inverse on the space of even $2L_0$-periodic functions $E^1_{2L_0}=\{ \phi\in H_{2L_0}^1 | \phi \textrm{ even}\}$.
\end{lemma}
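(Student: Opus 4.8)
The plan is to regard $L=DF(\Phi^{(k_0,L_0)}_1)$, i.e. $L\psi=\kappa\,(c_{KdV}-\partial^2)^{-1}(\Phi^{(k_0,L_0)}_1\psi)$ with $\kappa=12\,V'''(0)/V''(0)$ (and writing $\Phi_1$ for $\Phi^{(k_0,L_0)}_1$), as a compact self-adjoint operator whose two largest eigenvalues can be written down explicitly, and then to deduce invertibility of $I-L$ on the even subspace from the simplicity of the eigenvalue $1$ together with the oddness of its eigenfunction. Throughout I use the normalisation $c_{KdV}(k_0,L_0)=1$, which makes $c_{KdV}-\partial^2=1-\partial^2$ the very operator defining the $H^1_{2L_0}$ inner product.

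First I would dispatch the structural claims. Compactness follows by writing $L=\kappa\,(c_{KdV}-\partial^2)^{-1}M_{\Phi_1}$, where $M_{\Phi_1}$ is multiplication by the smooth bounded profile $\Phi_1$; then $M_{\Phi_1}\colon H^1_{2L_0}\to H^1_{2L_0}$ is bounded, $(c_{KdV}-\partial^2)^{-1}\colon H^1_{2L_0}\to H^3_{2L_0}$ is bounded, and $H^3_{2L_0}\hookrightarrow H^1_{2L_0}$ is compact (the Fourier weights $1+(m\pi/L_0)^2$ give a periodic Rellich theorem), so $L$ is compact. For self-adjointness, since $\langle\phi,\psi\rangle_{H^1_{2L_0}}=\langle(1-\partial^2)\phi,\psi\rangle_{L^2_{2L_0}}$ and $c_{KdV}=1$, the weight cancels the resolvent and one computes $\langle L\phi,\psi\rangle_{H^1_{2L_0}}=\kappa\langle\Phi_1\phi,\psi\rangle_{L^2_{2L_0}}=\kappa\langle\phi,\Phi_1\psi\rangle_{L^2_{2L_0}}=\langle\phi,L\psi\rangle_{H^1_{2L_0}}$, using that $\Phi_1$ is real. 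Finally $L$ commutes with the reflection $\xi\mapsto-\xi$ (because $\Phi_1$ is even and the resolvent preserves parity), so the even subspace $E^1_{2L_0}$ and its odd complement are $L$-invariant.

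Next I would produce the two eigenpairs by differentiating the profile equation. Rewriting the integrated KdV equation \eref{equ:intKdVPhi} as $-\Phi_1''=c_{KdV}\Phi_1-(\kappa/2)\Phi_1^2$ gives $(c_{KdV}-\partial^2)\Phi_1=(\kappa/2)\Phi_1^2$, hence $L\Phi_1=2\Phi_1$; differentiating \eref{equ:intKdVPhi} once gives $(c_{KdV}-\partial^2)\Phi_1'=\kappa\,\Phi_1\Phi_1'$, hence $L\Phi_1'=\Phi_1'$. From \eref{equ:cnoidalsolkL} the profile $\Phi_1=A+B\cn^2$ is even and strictly positive ($A>0$, $B>0$), while $\Phi_1'\propto\sn\,\cn\,\dn$ is odd with exactly two zeros per period. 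An eigenvalue $\lambda\neq0$ of $L$ corresponds to a nonzero $2L_0$-periodic solution of the Hill equation $-\psi''+c_{KdV}\psi-(\kappa/\lambda)\Phi_1\psi=0$, so $\lambda$ is simple precisely when this periodic solution space is one-dimensional. For $\lambda=2$ the potential strength is $\kappa/2$ and $\Phi_1$ is a nodeless periodic solution; a nodeless periodic eigenfunction sits at the lowest band edge of a Hill operator, which is always simple, so $\lambda=2$ is simple (the strict positivity of $\Phi_1$ and of the periodic Green's kernel of $c_{KdV}-\partial^2$ also identify $2$, via Krein--Rutman, as the spectral radius).

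The crux is the simplicity of $\lambda_1=1$. After the substitution $z=(K(k_0)/L_0)\xi$ the equation $-\psi''+c_{KdV}\psi-\kappa\Phi_1\psi=0$ becomes, up to an additive shift of the spectral parameter, the Lam\'e equation $-y''+12\,k_0^2\,\sn^2(z;k_0^2)\,y=h\,y$ with $\ell(\ell+1)=12$, i.e. the $\ell=3$ Lam\'e operator, and its periodic solution $\Phi_1'\propto\sn\,\cn\,\dn$ is exactly one of the $2\ell+1=7$ Lam\'e polynomials. The decisive input is the finite-gap property: the $\ell=3$ Lam\'e potential has precisely three open gaps, so its seven band edges are simple and coincide with the Lam\'e-polynomial energies; at such a simple band edge the monodromy is a nontrivial Jordan block and the $2L_0$-periodic solution space is one-dimensional (equivalently, reduction of order off the zeros of $\Phi_1'$ yields a non-periodic second solution). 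Hence $\lambda_1=1$ is simple with the \emph{odd} eigenfunction $\Phi_1'$, and in particular admits no even eigenfunction, so $1\notin\sigma(L|_{E^1_{2L_0}})$. Since $L|_{E^1_{2L_0}}$ is compact, $1$ is isolated from its spectrum and $I-L$ has a bounded inverse on $E^1_{2L_0}$. I expect this last step---ruling out a second, even, $2L_0$-periodic solution at the energy $-c_{KdV}$---to be the main obstacle, which is exactly why I would carry it out through the spectral theory of the Lam\'e equation recorded in~\ref{app:Lame}.
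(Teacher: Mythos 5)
Your proposal is correct and takes essentially the same route as the paper: the compactness argument (smoothing by the resolvent plus periodic Rellich), the self-adjointness computation (the $H^1_{2L_0}$ weight cancelling the resolvent because $c_{KdV}(k_0,L_0)=1$), and the two eigenpairs read off from the integrated and differentiated KdV travelling wave equations are exactly the paper's steps, while your handling of simplicity via the $n=3$ and $n=2$ Lam\'e band-edge theory is precisely the argument the paper records in~\ref{app:Lame}. If anything, you are more careful at the crux than the paper's main-text justification, which asserts that the Hill equation ``can not have two independent bounded solutions''---a claim that, read literally, conflates bounded with periodic (inside a stable band there exist two independent bounded Floquet solutions); your monodromy/band-edge formulation, with simplicity of the periodic eigenvalue at an open-gap edge, is the precise statement, and it is exactly what the paper's Lam\'e appendix supplies.
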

\begin{proof}
That $L$ is a self-adjoint operator follows from
\begin{eqnarray*}
&&\langle L\psi,\phi\rangle_{H^1_{2L_0}}=2L_0\sum_{m=-\infty}^{+\infty}(1+(m\pi/L_0)^2)\what{L\psi}(m)\ol{\what{\phi}(m)}\\
&&=2L_0\frac{12V'''(0)}{V''(0)}\sum_{m=-\infty}^{+\infty}\left(\Phi_1^{(k_0,L_0)}\psi\right)^{\what{}}(m)\ol{\what{\phi}(m)}=\frac{12V'''(0)}{V''(0)}\int_{0}^{2L_0}(\Phi_1^{(k_0,L_0)}\psi \phi)(x)dx .
\end{eqnarray*}

The compactness of $L$ is a consequence of the fact that $L$ is a bounded operator from $H^1_{2L_0}$ into $H^2_{2L_0}$ and the Rellich compactness theorem, which guarantees that $H^2_{2L_0}$ is compactly embedded into $H^1_{2L_0}$.

Concerning the spectral properties of $L$ we observe the following: since $L$ is a compact self-adjoint operator all eigenvalues are real and semi-simple. Indeed, all eigenvalues are simple: note that the spectral problem $L\psi=\lambda\psi$ where $L$ is given by \eref{equ:linearizationop} is equivalent to the second order differential equation
\begin{equation} \label{equ:equL}
-c_{KdV}\psi-\frac{12}{\lambda}\frac{V'''(0)}{V''(0)}\Phi_1^{(k_0,L_0)}\psi-\psi''=0,
\end{equation}
which can not have two independent bounded solutions. Moreover, if $\lambda=1$ then a solution of \eref{equ:equL} is $\psi=\frac{d}{d\xi}\Phi_1^{(k_0,L_0)}$ since $\Phi_1^{(k_0,L_0)}$ is a solution of the KdV equation \eref{equ:nonintKdVPhi}. If $\lambda=2$ then a solution of \eref{equ:equL} is $\psi=\Phi_1^{(k_0,L_0)}$ since $\Phi_1^{(k_0,L_0)}$ is a solution of the integrated KdV equation \eref{equ:intKdVPhi}.

\end{proof}

\subsection{A quantified version of the standard implicit function theorem}

The following result will be borrowed from \cite{friesecke1999solitary}, where it has been applied to the solitary wave case.

\begin{lemma}[Quantitative version of the implicit function theorem, \cite{friesecke1999solitary} Lemma~A.1] \label{lem:IFT}
Suppose $F$ and $\tilde{F}$ are $C^1$ maps from a ball in a Banach space $E$ into $E$. Suppose $F$ has a fixed point $\Phi_1$, i.e.~$F(\Phi_1)=\Phi_1$. Let $L=DF(\Phi_1)$ denote the linearization of $F$ at $\Phi_1$. Then assume that the operator $I-L$ is invertible with $|(I-L)^{-1}|\leq C_0<\infty$. Assume that $\delta>0$ is sufficiently small such that
\begin{equation*}
|DF(\Phi)-DF(\Phi_1)|\leq C_1<C_0^{-1} \qquad \textrm{for}\quad |\Phi - \Phi_1|\leq \delta.
\end{equation*}
Suppose that $\tilde{F}$ is small in the sense that for some positive constants $C_2$ and $\theta$ satisfying $C_0(C_1+C_2)\leq \theta<1$, we have
\begin{equation*}
|\tilde{F}(\Phi_1)|<\delta (1-\theta)/C_0 \quad \textrm{and} \quad |D\tilde{F}(\Phi)|\leq C_2 \qquad \textrm{for}\quad |\Phi - \Phi_1|\leq \delta.  
\end{equation*}
Then the equation $\Phi=F(\Phi)+\tilde{F}(\Phi)$ has a unique fixed point satisfying $|\Phi-\Phi_1|\leq \delta$, and moreover
\begin{equation}
|\Phi-\Phi_1|\leq (1-\theta)^{-1}C_0|\tilde{F}(\Phi_1)|<\delta.
\end{equation}
\end{lemma}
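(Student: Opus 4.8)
The plan is to recast the fixed point problem $\Phi=F(\Phi)+\tilde F(\Phi)$ in a form to which the Banach contraction principle applies, the essential device being to precondition by the bounded inverse $(I-L)^{-1}$ rather than to iterate the raw map $\Phi\mapsto F(\Phi)+\tilde F(\Phi)$, whose linearization $L$ at $\Phi_1$ need not be a contraction. Writing
\[
   R(\Phi):=F(\Phi)-\Phi_1-L(\Phi-\Phi_1)
\]
for the first-order Taylor remainder of $F$ at $\Phi_1$ and using $F(\Phi_1)=\Phi_1$, the equation $\Phi=F(\Phi)+\tilde F(\Phi)$ is equivalent to $(I-L)(\Phi-\Phi_1)=R(\Phi)+\tilde F(\Phi)$, and hence, since $I-L$ is invertible, to the fixed point equation $\Phi=T(\Phi)$ with
\[
   T(\Phi):=\Phi_1+(I-L)^{-1}\bigl[R(\Phi)+\tilde F(\Phi)\bigr].
\]
The goal is then to show that $T$ maps the closed ball $B_\delta=\{\Phi:|\Phi-\Phi_1|\le\delta\}$ into itself and is a contraction there.

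For the contraction estimate I would compute, for $\Phi,\Psi\in B_\delta$,
\[
   T(\Phi)-T(\Psi)=(I-L)^{-1}\bigl[(R(\Phi)-R(\Psi))+(\tilde F(\Phi)-\tilde F(\Psi))\bigr].
\]
The remainder difference is $R(\Phi)-R(\Psi)=\int_0^1\bigl[DF(\Psi+t(\Phi-\Psi))-DF(\Phi_1)\bigr](\Phi-\Psi)\,dt$ by the fundamental theorem of calculus for $C^1$ maps; because $B_\delta$ is convex the intermediate points $\Psi+t(\Phi-\Psi)$ lie in $B_\delta$, so the hypothesis $|DF(\cdot)-DF(\Phi_1)|\le C_1$ yields $|R(\Phi)-R(\Psi)|\le C_1|\Phi-\Psi|$. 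Likewise $|D\tilde F|\le C_2$ gives $|\tilde F(\Phi)-\tilde F(\Psi)|\le C_2|\Phi-\Psi|$. Combining these with $|(I-L)^{-1}|\le C_0$ and $C_0(C_1+C_2)\le\theta<1$ produces $|T(\Phi)-T(\Psi)|\le\theta|\Phi-\Psi|$.

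For the self-mapping property, note that $R(\Phi_1)=0$, so $T(\Phi_1)-\Phi_1=(I-L)^{-1}\tilde F(\Phi_1)$ and thus $|T(\Phi_1)-\Phi_1|\le C_0|\tilde F(\Phi_1)|<\delta(1-\theta)$ by hypothesis; combined with the contraction bound, for any $\Phi\in B_\delta$ one gets $|T(\Phi)-\Phi_1|\le|T(\Phi)-T(\Phi_1)|+|T(\Phi_1)-\Phi_1|\le\theta\delta+\delta(1-\theta)=\delta$. The contraction mapping theorem then furnishes a unique fixed point $\Phi\in B_\delta$, which is exactly the asserted unique solution of $\Phi=F(\Phi)+\tilde F(\Phi)$ in $B_\delta$. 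The quantitative estimate comes from the same triangle inequality around $T(\Phi_1)$: from $|\Phi-\Phi_1|=|T(\Phi)-\Phi_1|\le\theta|\Phi-\Phi_1|+C_0|\tilde F(\Phi_1)|$ one rearranges to $|\Phi-\Phi_1|\le(1-\theta)^{-1}C_0|\tilde F(\Phi_1)|$, which is $<\delta$ by the assumed smallness of $\tilde F(\Phi_1)$.

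I do not expect a genuine analytic obstacle here: the whole content is the preconditioning idea together with the bookkeeping needed to make the contraction constant and the self-mapping radius close up \emph{simultaneously} under the stated inequalities $C_0(C_1+C_2)\le\theta<1$ and $C_0|\tilde F(\Phi_1)|<\delta(1-\theta)$. The only point demanding mild care is to apply the mean-value/Taylor-remainder estimates on the convex ball $B_\delta$, so that every intermediate argument stays in the region where the derivative bounds $C_1$ and $C_2$ are valid.
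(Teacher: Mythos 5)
Your proof is correct. Note that the paper itself offers no proof of this lemma---it imports it verbatim from \cite{friesecke1999solitary}, Lemma~A.1---and your argument (preconditioning the equation by $(I-L)^{-1}$ to form the map $T$, establishing the contraction constant $\theta$ via the Taylor-remainder and mean-value bounds on the convex closed ball, and closing the self-mapping radius with $C_0|\tilde F(\Phi_1)|<\delta(1-\theta)$) is essentially the same contraction-mapping argument given in that reference.
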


\subsection{Proof of the main result}
Here we finally give the proof of our main result Theorem~\ref{thm:mainresult}, by combining the Fourier multiplier estimate from Lemma \ref{lem:pseudoestimate},
the spectral information gathered in Lemma \ref{lem:specanalysis}, and the above version of the implicit function theorem.

\begin{proof}[Proof of Theorem~\ref{thm:mainresult}]
We first prove statements (a) and (b). Introduce the operators
\begin{equation*}
F(\Phi)=P^{(0)}N^{(0)}(\Phi), \qquad \tilde{F}(\Phi)=P^{(\eps)}N^{(\eps)}(\Phi)-P^{(0)}N^{(0)}(\Phi),
\end{equation*}
where the symbols of the operators $P^{(\eps)}$ and $P^{(0)}$ are given by \eref{equ:symbolpeps} and \eref{equ:symbolp0}, respectively. Moreover, recall that
\begin{equation*}
N^{(\eps)}(\Phi)=\frac{V'''(0)}{2}\Phi^2(1+\eta(\eps^2\Phi)), \qquad N^{(0)}(\Phi)=\frac{V'''(0)}{2}\Phi^2.
\end{equation*}
We now want to solve the fixed point equation \eref{equ:fixedpointphieps}, which is equivalent to the equation $\Phi=F(\Phi)+\tilde{F}(\Phi)$, by the implicit function theorem. Hence we prove that on the space $E_{2L_0}^1(\R)$ of even functions in $H_{2L_0}^1(\R)$ the operators $F$ and $\tilde{F}$ satisfy the criteria of Lemma~\ref{lem:IFT}. That $F$ fulfills the required properties has been shown in Lemma~\ref{lem:specanalysis}. It remains to establish the estimates on $\tilde{F}$ and $D\tilde{F}$. For that purpose write
\begin{eqnarray*}
\tilde{F}(\Phi)&=&P^{(\eps)}N^{(\eps)}(\Phi)-P^{(0)}N^{(0)}(\Phi)\\
&=&P^{(\eps)}\big(N^{(\eps)}(\Phi)-N^{(0)}(\Phi)\big)+\big( P^{(\eps)}-P^{(0)}\big)N^{(0)}(\Phi)\\
&=&P^{(\eps)}\frac{V'''(0)}{2}\Phi^2\eta(\eps^2\Phi)+\big( P^{(\eps)}-P^{(0)}\big)\frac{V'''(0)}{2}\Phi^2.
\end{eqnarray*}
Now choose $\delta>0$ sufficiently small such that the hypotheses of the Lemma~\ref{lem:IFT} are guaranteed. Since $\eta\in C^1$ with $\eta(0)=0$ we have $\|\eta(\eps^2\Phi)\|\leq C\eps^2$. Moreover, using Lemma \ref{lem:pseudoestimate} b) we deduce
\begin{equation*}
\| \tilde{F}\|_{\mathcal{L}(H^1_{2L_0})}\leq C\eps^2, \qquad \| D\tilde{F}\|_{\mathcal{L}(H^1_{2L_0})}\leq C\eps^2,
\end{equation*}
where $C$ is a constant independent of $\eps$. Then Lemma~\ref{lem:IFT} implies that there exists a unique solution $\Phi_1^{(\eps,k_0,L_0)}$ of \eref{equ:fixedpointphieps} satisfying
\begin{equation*}
\| \Phi^{(\eps,k_0,L_0)} -\Phi_1^{(k_0,L_0)}\|\leq C\eps^2,
\end{equation*}
where $\Phi_1^{(k_0,L_0)}$ is the fixed point solution of $\Phi=F(\Phi)$ and given by \eref{equ:cnoidalsolkL} with $k_0$ and $L_0$ fixed such that $c_{KdV}(k_0,L_0)=1$. With $r_c:=\eps^2\Phi^{(\eps,k_0,L_0)}(\eps .)$ this proves statements (a) and (b) of the theorem.

For proving the statement (c), we fix $\eps$ but reintroduce the parameter $c_{KdV}=4K^2(k)\sqrt{1-k^2+k^4}/L^2$ by assuming the parameters $k\in(0,1)$ and $L\in \R^+$ not to be fixed. That is, we take $c^2=V''(0)(1+\eps^2c_{KdV}/12)$. Then we define the map
\begin{equation*}
G(\Phi,k,L)=P^{(\eps)}N^{(\eps)}(\Phi), \qquad G:E^1_{2L}\times (0,1)\times \R^+ \longrightarrow E^1_{2L}.
\end{equation*}
Now observe that $(k,L)\mapsto c_{KdV}$ and thus $(k,L)\mapsto c$ is $C^1$. Thus the mapping
\begin{equation*}
(k,L)\mapsto p=\sinc^2(\,\cdot\, \frac{\pi}{2L})/(c^2-V''(0)\sinc^2(\,\cdot\, \frac{\pi}{2L}))
\end{equation*}
is smooth from $(0,1)\times \R^+$ into $L^{\infty}(\R)$. From this it follows that the map $G$ is $C^1$. For $(k_0,L_0)$ fixed such that $c_{KdV}=1$ we have shown that there is a fixed point $\Phi_1^{(\eps,k_0,L_0)}$ and from the existence proof it follows that $I-D_{\Phi}G(\Phi_1^{(\eps,k_0,L_0)},k_0,L_0)$ is an isomorphism on $E_{2L_0}^1$.

From the standard implicit function theorem \cite[Theorem 4B]{zeidler1986nonlinear} we can thus deduce the following: for some small enough neighbourhood of $(k_0,L_0)\in (0,1)\times \R^+$ there exists a $C^1$-curve $(k,L)\mapsto \Phi^{(\eps,k,L)}$ of solutions of \eref{equ:fixedpointphieps} such that $\Phi^{(\eps,k_0,L_0)}=\Phi_1^{(\eps,k_0,L_0)}$. Thus the mapping $(k,L)\mapsto c_{KdV}^{-1} \Phi^{(\eps,k,L)}(c_{KdV}^{-1/2}.)$ is $C^1$ using again that $(k,L)\mapsto c_{KdV}$ is $C^1$. Note that $c_{KdV}^{-1}\Phi^{(\eps,k,L)}(c_{KdV}^{-1/2}.)$ is a solution of $\Phi=P^{(\eps_1)}N^{(\eps_1)}(\Phi)$ with $\eps_1=\sqrt{c_{KdV}}\eps$ and $c^2=V''(0)(1+\eps_1^2/12)$. Moreover, if $\eps_1$ is sufficiently small and $(k,L)$ in a small enough neighbourhood of $(k_0,L_0)$ then $\| c_{KdV}^{-1} \Phi^{(\eps,k,L)}(c_{KdV}^{-1/2}.) - \Phi_1^{(\eps,k_0,L_0)}\|<\delta$ and so $c_{KdV}^{-1} \Phi^{(\eps,k,L)}(c_{KdV}^{-1/2}.)=\Phi^{(\eps_1,k,L)}$. It follows that $(k,L)\mapsto \Phi^{(\eps_1(
 k,L),k,L)}$ and hence $c=(V''(0)(1+\eps_1^2/12))^{1/2}\mapsto r_c=\eps_1^2\Phi^{(\eps_1,k,L)}(\eps_1 .)$ are $C^1$ maps into $H^1_{2L}$.
\end{proof}

\appendix

\section{Results from the theory on the Lam\'e equation} \label{app:Lame}

The spectral problem for $L$ defined by \eref{equ:linearizationop} which reads 
\begin{equation} \label{equ:specprobL}
L\psi(\xi)=\lambda \psi (\xi) \qquad \textrm{on} \quad H^1_{2L}(\R),
\end{equation}
is equivalent to the following problem
\begin{equation} \label{equ:specprobHcn}
H_{cn}^{\lambda}\psi(y)=h_{cn}^{\lambda}\psi(y) \qquad \textrm{on} \quad H^1_{2K}(\R),
\end{equation}
where the transformed coordinate, operator, and eigenvalue are 
\begin{eqnarray}
y&=&\frac{K}{L}\xi,\\
H_{cn}^{\lambda}&=&-\partial_y^2-\frac{12}{\lambda}k^2\cn^2(y;k^2),\\
h_{cn}^{\lambda}&=&\frac{1}{\lambda}(4-8k^2)+4\Big( \frac{1}{\lambda}-1\Big)\sqrt{1-k^2+k^4}.
\end{eqnarray} 

Using the relation $\cn^2+\sn^2=1$ this spectral problem transforms into
\begin{equation} \label{equ:specprobHsn}
H_{sn}^{\lambda}\psi(y)=h_{sn}^{\lambda}\psi(y) \qquad \textrm{on} \quad H^1_{2K}(\R),
\end{equation}
where the operator and eigenvalue are
\begin{eqnarray}
H_{sn}^{\lambda}&=&-\partial_y^2+\frac{12}{\lambda}k^2\sn^2(y;k^2), \label{equ:Hsnlambda}\\
h_{sn}^{\lambda}&=&h_{cn}^{\lambda}+\frac{12}{\lambda}k^2 \label{equ:eigenvaluehsn}.
\end{eqnarray} 
The general eigenvalue problem for the operator $H_{sn}^{\lambda}$ is connected to the Jacobian form of the Lam\'e equation, namely 
\begin{equation} \label{equ:lameequ}
-\psi''(y)+n(n+1)k^2\sn^2(y;k^2)\psi(y)=E\psi(y).
\end{equation}
For $n\in \N$ the solutions of this equation are meromorphic in the whole complex plane, while for $n$ other than an integer they are multi-valued. We will use results from the theory on the Lam\'e equation \cite{ince1940aperiodic,arscott1964periodic} combined with well-known facts from Floquet theory \cite{magnus2004hill}.

We start by recalling the basic facts from Floquet theory. Let us consider the spectral problem \eref{equ:lameequ} together with the boundary condition
\begin{equation*}
\psi(y+2K)=\E ^{\I 2K \kappa}\psi(y):=\sigma \psi(y),
\end{equation*}
where $\kappa$ is referred to as the momentum and $\sigma$ the Floquet multiplier. For any $\kappa \in \R$, i.e.~$|\sigma |=1$, this defines a self-adjoint boundary value problem. The spectrum of the \emph{periodic} ($\sigma=1$) and \emph{semi-periodic} ($\sigma=-1$) problem consists of infinitely many discrete eigenvalues which only accumulate at $+\infty$ and the corresponding eigenfunctions are $2K$- and $4K$-periodic, respectively. Let $E_0^+\leq E_1^+\leq E_2^+\leq \dots$ denote the eigenvalues and $\psi_0^+,\psi_1^+,\psi_2^+,\dots$ the $2K$-periodic eigenfunctions of the periodic problem. Similarly, $E_0^-\leq E_1^-\leq E_2^-\leq\dots$ and $\psi_0^-,\psi_1^-,\psi_2^-,\dots$ are the eigenvalues and $4K$-periodic eigenfunctions of the semi-periodic problem. We have that
\begin{itemize}
\item[(i)] $\psi_0^+$ has no zeros and $\psi_0^-$ exactly one zero in $[0,2K(k)]$,
\item[(ii)] $\psi_{2i-1}^+$ and $\psi_{2i}^+$ have exactly $2i$ zeros in the interval $[0,2K)$, for $i\geq 1$, and
\item[(iii)] $\psi_{2i-1}^-$ and $\psi_{2i}^-$ have exactly $2i-1$ zeros in the interval $[0,2K)$, for $i\geq 1$.
\end{itemize}
Then by oscillation theory we have
\begin{equation*}
E_0^+<E_0^-\leq E_1^-<E_1^+\leq E_2^+<E_2^-\leq E_3^- <\dots
\end{equation*}
The bands 
\begin{equation*}
(E_0^+,E_0^-),\ (E_1^-,E_1^+),\ (E_2^+,E_2^-),\dots
\end{equation*}
are the \emph{stable bands}. For each value in these bands we have $\kappa \in \R$ and hence $|\sigma|=1$, i.e.~only bounded but non-periodic solutions exist. Inbetween these bands we have gaps 
\begin{equation*}
(E_0^+,E_0^-),\ (E_1^-,E_1^+),\ (E_2^+,E_2^-),\dots
\end{equation*}
which are called \emph{bands of instability}. For values of $E$ lying in these gaps only unbounded non-periodic solutions exist, since the multiplier $|\sigma|\neq 1$ and the momentum $\kappa$ can not be real.

The solutions of the Lam\'e equation \eref{equ:lameequ} are given by
\begin{equation*}
\psi_{\pm}(y)=\prod_{j=1}^{n}\frac{H(y\pm\alpha_j)}{\Theta(y)}\exp\left\{ \mp yZ(\alpha_j)\right\},
\end{equation*}
for certain constants $\alpha_1,\dots,\alpha_n$, see \cite{whittaker1962course}. Here $H$ denotes the Jacobian eta, $\Theta$ the Jacobian theta, and $Z$ the Jacobian zeta function \cite{byrd1971handbook}. At a band edge the eigenfunctions correspond to polynomials in the Jacobian elliptic functions $\cn$, $\sn$, $\dn$ and are called \emph{Lam\'e polynomials}.

It is known \cite{ince1940aperiodic}, \cite[Theorem~7.8]{magnus2004hill} that if and only if $n$ is an integer, $2K$- or $4K$-periodic solutions can coexist, that is the spectrum has finite-band structure (since two band edges collide) with $n+1$ bands (if $n$ is non-negative). The $2n+1$ band edges correspond to simple eigenvalues of the periodic or semi-periodic problem.

Let us consider the \emph{periodic} problem 
\begin{equation} \label{equ:specprobP}
H_{sn}^n\psi(y)=h\psi(y), \qquad \psi(0)=\psi(2K), \quad \psi'(0)=\psi'(2K),
\end{equation}
where $H_{sn}^n$ equals the operator given by \eref{equ:Hsnlambda} if we set $12/\lambda=n(n+1)$.
Ince \cite{ince1940aperiodic} studied the Lam\'e equation for $n\in \R$, hence not necessarily an integer. From his results it follows that there exist $2K$-periodic eigenfunctions of $H_{sn}^n$ with corresponding eigenvalue $h$ of the specific form \eref{equ:eigenvaluehsn}, that is in terms of $n$
\begin{equation} \label{equ:eigenvaluehsnn}
h^n_{sn}=\frac{1}{3} \Big( n(n+1)(1+k^2)+4(n(n+1)-3)\sqrt{1-k^2+k^4}\Big),
\end{equation}
if and only if $n=2$ ($\lambda=2$) or $n=3$ ($\lambda=1$). Only in these cases Ince's continued fraction representations for $h$ terminate and give rise to solutions given by the finite form \eref{equ:eigenvaluehsnn}. In either cases, $n=2$ and $n=3$, the concerned eigenvalues $h_{sn}^{2}$ and $h_{sn}^{3}$ are simple and correspond to a band edge of the finite band spectrum of $H_{sn}^2$ and $H_{sn}^3$, respectively.
Note that an eigenvalue $\lambda$ of the spectral problem \eref{equ:specprobL} is simple if and only if the corresponding eigenvalue $h_{sn}^{\lambda}$ of \eref{equ:specprobHsn} is simple.

{\bf Case $\lambda=1$}:

In this case the spectrum of the operator $H_{sn}^3$ consists of the four spectral bands
\begin{equation*}
(E_0^+,E_0^-), (E_1^-,E_1^+), (E_2^-,E_2^+), (E_3^-,\infty),
\end{equation*}
where the $E_i^+$'s are the eigenvalues corresponding to the periodic problem, and the $E_i^-$'s the ones corresponding to the semi-periodic problem. Hence the first three eigenvalues $E_0^+,E_1^+,E_2^+$ of \eref{equ:specprobP} are simple and all other eigenvalues $E_i^+$ with $i\geq 3$ are double. From \cite{ince1940aperiodic} or \cite{arscott1964periodic} (the eigenvalues and eigenfunctions are calculated explicitly for the cases $n=1,2,3$) we get results summarized in Table~\ref{tab:specproperties1}.

\begin{table}[h!]
\begin{center}
\begin{tabular}{|l|l|c|c|}
\hline
\multicolumn{4}{|c|}{Case $n=3$ ($\lambda=1$)}\\
\hline
eigenvalue & eigenfunction & period & parity \\
\hline
$E_0^+=2+5k^2-2\sqrt{1-k^2+4k^4}$ & $\psi_0^+(y)=\dn y[1-(E_0^+-k^2)\sn^2y]$ & $2K$ & even\\
$E_0^-=5+2k^2-2\sqrt{4-k^2+k^4}$ & $\psi_0^-(y)=\cn y[1-(E_0^--1)\sn^2y]$ & $4K$ & even\\
$E_1^-=5+5k^2-2\sqrt{4-7k^2+4k^4}$ & $\psi_1^-(y)=3\sn y-(E_1^--1-k^2)\sn^3y$ & $4K$ & odd\\
$E_1^+=4+4k^2$ & $\psi_1^+(y)=\sn y\ \cn y\ \dn y$ & $2K$ & odd\\
$E_2^+=2+5k^2+2\sqrt{1-k^2+4k^4}$ & $\psi_2^+(y)=\dn y[1-(E_2^+-k^2)\sn^2y]$ & $2K$ & even\\
$E_2^-=5+2k^2+2\sqrt{4-k^2+k^4}$ & $\psi_2^-(y)=\cn y[1-(E_2^--1)\sn^2y]$ & $4K$ & even\\
$E_3^-=5+5k^2+2\sqrt{4-7k^2+4k^4}$ & $\psi_3^-(y)=3\sn y-(E_3^--1-k^2)\sn^3y$ & $4K$ & odd\\
\hline 
\end{tabular}
\end{center}
\caption{The spectral quantities of the spectral problem of the operator $H_{sn}^3=-\partial^2_y+12k^2\sn^2(y;k^2)$ with periodic or semi-periodic boundary conditions. The eigenvalues correspond to the band edges of the finite band spectrum. The associated eigenfunctions have period $2K$ or $4K$ and are polynomials in the Jacobian elliptic functions $\sn(y;k^2)$, $\cn(y;k^2)$, and $\dn(y;k^2)$.\label{tab:specproperties1}}
\end{table}

The eigenvalue $E_1^+=4+4k^2$ is indeed of the form \eref{equ:eigenvaluehsnn} for $n=3$ and the corresponding eigenfunction has period $2K$ and is given by $\psi_1^+(y)=\sn y\ \cn y\ \dn y$. Thus tracing back definitions we get the following result on the spectral problem \eref{equ:specprobL}: the operator $L$ has the simple eigenvalue $\lambda=1$ with the $2L$-periodic odd eigenfunction $\psi(\xi)=\sn(K\xi/L;k^2)\cn(K\xi/L;k^2)\dn(K\xi/L;k^2)$. Indeed, this gives the same result as derived in Lemma~\ref{lem:specanalysis}. To see this note that $(\cn y)'=-\sn y\ \dn y$ and hence $\frac{d}{d\xi}\Phi_1^{(k,L)}\propto \psi(\xi)$. 

The eigenvalues of the periodic and semi-periodic spectral problem listed in Table~\ref{tab:specproperties1} depend on the elliptic modulus $k^2$. This dependence is visualized in Figure~\ref{fig:bandslambda1}. In the limiting case $k^2\to 1$ we have $\cn y \approx \sech y$. The finite band structure arising from the periodic potential turns into a purely discrete spectrum. The corresponding eigenfunctions are given by the associated Legendre polynomials in $\tanh y$. In the case $k^2\to 0$ we have $\cn y\approx \cos y$ and the Lam\'e equation turns into Mathieu's equation. The eigenfunctions are then polynomials in the Mathieu cosine and sine function.

\begin{figure}[h!]
\centering
\includegraphics[width=.6\textwidth]{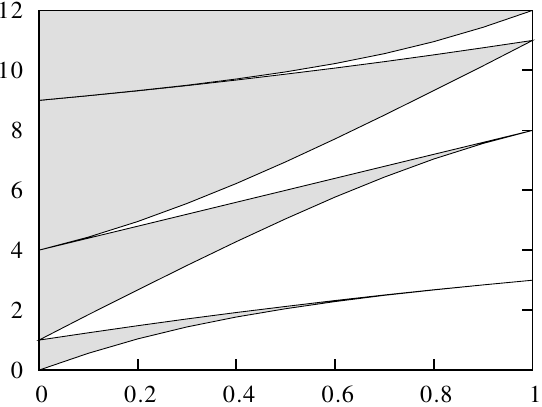}
\caption{The finite band structure of the periodic operator $H_{sn}^3$. The band edges depend on the elliptic modulus $k^2\in (0,1)$ as given in Table~\ref{tab:specproperties1}. \label{fig:bandslambda1}}
\end{figure}

{\bf Case $\lambda=2$}:

In this case the spectrum of the operator $H_{sn}^2$ consists of the three spectral bands
\begin{equation*}
(E_0^+,E_0^-), (E_1^-,E_1^+), (E_2^-,\infty),
\end{equation*}
where the $E_i^+$'s are the eigenvalues corresponding to the periodic problem, and the $E_i^-$'s the one corresponding to the semi-periodic problem. Hence the first two eigenvalues $E_0^+$ and $E_1^+$ of \eref{equ:specprobP} are simple and all other eigenvalues $E_i^+$ with $i\geq 2$ are double. From \cite{ince1940aperiodic} or \cite{arscott1964periodic} we get the results summarized in Table~\ref{tab:specproperties2}.

\begin{table}[h!]
\begin{center}
\begin{tabular}{|l|l|c|c|}
\hline
\multicolumn{4}{|c|}{Case $n=2$ ($\lambda=2$)}\\
\hline
eigenvalue & eigenfunction & period & parity \\
\hline
$E_0^+=2+2k^2-2\sqrt{1-k^2+k^4}$ & $\psi_0^+(y)=2-E_0^+\sn^2y$ & $2K$ & even\\
$E_0^-=1+k^2$ & $\psi_0^-(y)=\cn y\ \dn y$ & $4K$ & even\\
$E_1^-=1+4k^2$ & $\psi_1^-(y)=\sn y\ \dn y$ & $4K$ & odd\\
$E_1^+=4+k^2$ & $\psi_1^+(y)=\sn y\ \cn y$ & $2K$ & odd\\
$E_2^+=2+2k^2+2\sqrt{1-k^2+k^4}$ & $\psi_2^+(y)=\dn y[1-(E_2^+-k^2)\sn^2y]$ & $2K$ & even\\
\hline 
\end{tabular}
\end{center}
\caption{The spectral quantities of the spectral problem of the operator $H_{sn}^2=-\partial^2_y+6k^2\sn^2(y;k^2)$ with periodic or semi-periodic boundary conditions. The eigenvalues correspond to the band edges of the finite band spectrum. The associated eigenfunctions have period $2K$ or $4K$ and are polynomials in the Jacobian elliptic functions $\sn(y;k^2)$, $\cn(y;k^2)$, and $\dn(y;k^2)$.\label{tab:specproperties2}}
\end{table}
The eigenvalue $E_0^+=2+2k^2-2\sqrt{1-k^2+k^4}$ is indeed of the form \eref{equ:eigenvaluehsnn} for $n=2$ and the corresponding eigenfunction has period $2K$ and is given by $\psi_0^+(y)=2-E_0^+\sn^2y$. Thus tracing back definitions we get the following result on the spectral problem \eref{equ:specprobL}: the operator $L$ has the simple eigenvalue $\lambda=2$ with the $2L$-periodic even eigenfunction $\psi(\xi)=2-[1+k^2-\sqrt{1-k^2+k^4}]\sn^2(K\xi/L;k^2)$. Using the relation $\sn^2y=1-\cn^2$ we get $\Phi_1^{(k,L)}(\xi)\propto \psi(\xi)$, verifying our result derived in Lemma~\ref{lem:specanalysis}.

\section{Quasi-periodic KdV solutions} \label{app:periodicKdVsol}

In this appendix we give an overview of the beautiful theory of quasi-periodic KdV solutions. 

The inverse scattering transform developed by Gardner et al.~\cite{gardner1974korteweg} allows to solve the Cauchy problem for the KdV equation
\begin{equation} \label{equ:KdVZK1}
u_t+6uu_x+u_{xxx}=0.
\end{equation}
Unfortunately, this method is restricted to rapidly decreasing initial data. Due to the lack of methods for solving the inverse spectral problem for $H$ with periodic potential it had not been immediately clear how to adopt the inverse spectral method to the periodic case.

A breakthrough in this problem was achieved independently by Dubrovin \cite{dubrovin1975periodic,dubrovin1975inverse}, Its and Matveev \cite{its1975hill,its1975schroedinger}, and McKean and van Moerbeke \cite{mckean1975spectrum}, generalizing an earlier result by Akhiezer \cite{akhiezer1961theory}. 
They proposed a scheme to construct so-called finite-gap KdV solutions from given spectral data. 

To describe this method in a little more detail, consider the Lax operator with a periodic potential, i.e.,
\begin{equation*}
H=-\frac{d^2}{dx^2}+u(x,0), \qquad u(x+2L,0)=u(x,0).
\end{equation*}
Assume the absolutely continuous part of the spectrum of $H$ has finite-band structure, i.e.~it consists of $N+1$ intervals $[E_{2j-1},E_{2j}]\in \R$, $j=1,\dots,N+1$ with $E_{2N+2}=\infty$. Then consider the Dirichlet problem 
\begin{equation*}
H\psi=\mu \psi, \qquad \psi(0,0)=\psi(2L,0)=0.
\end{equation*}
We refer to the spectrum of this problem as auxiliary spectrum consisting of a set of discrete points $\{\mu_j(0,0)\}_{j=1}^{\infty}$. All but $N$ points of this set lie in the infinite interval $[E_{2N+1},\infty)$ of the continuous spectrum. Each of the $N$ remaining points is contained in one of the $N$ spectral gaps. Finally, the solution $u(x,t)$ is given by the trace formula
\begin{equation} \label{equ:traceformula}
u(x,t)=\sum_{i=1}^{2N+1} E_i-2\sum_{j=1}^N\mu_j(x,t).
\end{equation}
Here the functions $\mu_j(x,t)$, $j=1,\dots,N$ denote the space-time evolution of the auxiliary spectrum. These quantities live on the Riemann surface associated with the curve
\begin{equation} \label{equ:riemannsurface}
R:\ y^2=-\prod_{j=1}^{2N+1}(z-E_j),
\end{equation}
where each of them lies in a spectral gap, i.e.~$\mu_j(x,t)\in (E_{2j},E_{2j+1})$, $j=1,\dots,N$ and oscillates between these band edges as $x$ varies. More precisely, the dynamics in $x$ and $t$ is given by the Dubrovin equations
\begin{equation}
\mu_j'=\frac{\pm 2 y(\mu_j)}{\prod_{k\neq j}(\mu_j+\mu_k)},\quad \dot{\mu}_j=2\mu_j'\big( \sum_{i}E_i-2\sum_{k\neq j}\mu_k\big),
\end{equation}
where the prime and dot denote the derivative with respect to the space and time variable, resepctively.
Solving these equations is a matter of applying the Jacobi inversion problem to the so-called Abel map, that is a map from the Riemann surface $R$ to an $N$-torus called the Jacobi variety. In the case $N=1$, finding a periodic KdV solution is linked to the classical inversion problem for the elliptic integral, cf.~Section~\ref{sec:KdVsolutions}. 

The expression \eref{equ:traceformula} for the solutions of the KdV equation \eref{equ:KdVZK1}, called finite-gap KdV solutions, is finally given by
\begin{eqnarray} 
&&u(x,t)=c+2\frac{\partial^2}{\partial x^2}\ln \Theta(\mathbf{\xi}), \label{equ:itsmatveevformula1}\\
&&\Theta(\mathbf{\xi};B)=\sum_{\mathbf{k}\in\Z^N}\exp \left\{ 2\pi\I \langle \mathbf{k},\mathbf{\xi}\rangle+\pi \I \langle \mathbf{k},B\mathbf{k}\rangle\right\} ,\nonumber
\end{eqnarray}
where $\langle .,.\rangle$ denotes the scalar product in $\R^N$ and $c$ a constant. Moreover, $B$ is the symmetric period matrix of the Riemann surface $R$ of the curve \eref{equ:riemannsurface} and $\mathbf{\xi}=(\xi_1,\dots,\xi_N)$ denotes the vector of the generalized phases $\xi_j=K_jx-\omega_jt+\phi_j$, $j=1,\dots,N$. The $K_j$'s and $\omega_j$'s are completely determined by the continuous and auxiliary spectrum and can be computed in terms of Abelian integrals. Note that the Riemann $\Theta$-function fulfills the property 
\begin{equation*}
\Theta(\mathbf{\xi}+\mathbf{m}+B\mathbf{n};B)=\exp \left\{ -2\pi \I \langle \mathbf{n},\mathbf{\xi}\rangle-\pi \I \langle \mathbf{n},B\mathbf{n}\rangle\right\} \Theta (\mathbf{\xi};B),
\end{equation*}
for all $\mathbf{n},\mathbf{m}\in \Z^N$ and symmetric $N\times N$ matrices $B$.

In the one-gap case ($N=1$), the Riemann $\Theta$-function reduces to one of the Jacobi $\theta$-functions which are linked to the Jacobi elliptic functions $\cn$, $\sn$, and $\dn$ \cite{byrd1971handbook}. The solution \eref{equ:itsmatveevformula1} is then given by
\begin{equation} \label{equ:cnoidalwave1}
u(x,t)=E_2+(E_3-E_2)\cn^2\Big( \sqrt{\frac{E_3-E_1}{2}}\big(x-2(E_1+E_2+E_3)t\big);k^2\Big),
\end{equation}
where $E_1<E_2<E_3$ and $k^2=(E_3-E_2)/(E_3-E_1)$ is the elliptic modulus. Equation~\eref{equ:cnoidalwave1} is the so-called cnoidal wave which describes a periodic travelling wave solution of \eref{equ:KdVZK1} and has been already known to Korteweg and de~Vries in 1895 \cite{korteweg1895change}. The infinite period limit, $E_1\to E_2$ ($k^2\to1$), gives the single soliton 
\begin{equation} \label{equ:1soliton}
u(x,t)=\beta \sech^2\Big( \sqrt{\frac{\beta}{2}}\big(x-\beta t\big)\Big),
\end{equation}
where we have set $\beta=E_3-E_2$ and imposed the condition $u\to 0$ as $x\to \pm \infty$.

Note that any finite-gap solution of the form \eref{equ:itsmatveevformula1} is an Abelian function\footnote{An Abelian function is a meromorphic function in $N$ complex variables with $2N$ independent periods, the so-called fundamental periods. It is usually defined as a function on an Abelian variety.}, which constitutes a generalization of elliptic functions to $N$ variables. Thus, in the special case $N=1$ the solution is elliptic and referred to as \emph{cnoidal wave} or \emph{Lam\'e potential}.

The solution \eref{equ:itsmatveevformula1} is periodic in $x$ (since the initial condition is), but in general is only \emph{quasi-periodic}\footnote{A finite-gap solution is called periodic in $x$ with period $2L$ if the wave numbers $K_1,\dots,K_N$ are commensurable, i.e.~there exist $N$ integers $n_1,\dots,n_N$ such that $2LK_j=n_j$, $j=1,\dots,N$. Otherwise it is called \emph{quasi-periodic}.} in $t$. In general, however, any finite-gap KdV solution is quasi-periodic in both $x$ and $t$, since their construction via algebro-geometric methods \cite{belokolos1994algebro,gesztesy2003sea} allows to choose the quantities $\{E_j\}_{j=1}^{2N+1}\in \R$ and $\{\mu_j(0,0)\}_{j=1}^{N}$ arbitrarily such that the $K_j$'s and $\omega_j$'s are in general incommensurate quantities. More precisely, these methods use results on Abelian functions and Riemann surfaces. A key role
  is played by the eigenfunctions of the spectral problem for the Lax operator $H$. They turn out to correspond to the two distinct branches of one single-valued function, the so-called Baker-Akhiezer function, defined on a two-sheeted Riemann surface given by \eref{equ:riemannsurface}. 

Note that in the case of the the Toda lattice, where the interaction potential is given by $V(r)=\alpha (\E^{-\beta r}+\beta r-1)$ and the system is integrable, there have been constructed exact cnoidal wave solutions \cite{toda1981theory}. For the construction of all quasi-periodic finite-gap solutions for the Toda lattice we refer to \cite{teschl2000joa} or \cite{gesztesy2008sea}.

Finally, we would like to point out two major points which emphasize the significance of finite-gap solutions for understanding KdV dynamics:
\begin{itemize}
\item[(i)]
The finite-gap solution \eref{equ:itsmatveevformula1} degenerates to a multi-soliton solution if the band edges of the continuous spectrum and thus the branch points of the corresponding Riemann surface $R$ collide, i.e.~$E_{2j}\to E_{2j+1}$, $j=0,1,\dots,N-1$. In this case the Riemann theta function $\Theta$ reduces to a determinant with exponential entries. In this way the finite-gap KdV solutions \eref{equ:itsmatveevformula1} corresponding to $N$ gaps in the continuous spectrum constitute a natural (quasi-)periodic extension of the KdV $N$-soliton solutions.
\item[(ii)] The KdV equation has \emph{almost-periodic\footnote{A function $f$ is called \emph{almost periodic} in the sense of Bohr if for each $\eps>0$ there exists a relatively dense set of $\eps$-almost-periods, i.e.~translations $\tau$ with the property~$\sup_{x}(f(x+\tau),f(x))\leq\eps$. Alternatively, there is the \emph{Bochner criterion} for almost periodicity: a continuous function $f$ is almost periodic if and only if the family of functions $H=\{ f^h\}=\{f(x+h)\}$, $-\infty<h<\infty$, is compact in the set of continuous bounded functions.} infinite-gap} solutions corresponding to an infinite number of gaps in the spectrum \cite{levitan1982almost}. The set of all quasi-periodic finite-gap solutions (which also contains all periodic finite-gap solutions) is dense in the former set. This means, in particular, that any periodic KdV solution can be approximated by finite-gap KdV solutions \cite{levitan1983approximation}. These considerations concern the periodicity in the spatial variable, but what about the behaviour in time? It is known \cite{mckean1976hill,lax1976almost} that all KdV solutions which are periodic in $x$ are almost periodic in $t$.  

\end{itemize}

\ack
We thank Gerald Teschl for helpful comments, especially concerning relevant literature on the Korteweg-de Vries equation and its solutions. \\
A.M.~gratefully acknowledges the financial support by the Austrian Science Fund (FWF) under grant J3143.

\section*{References}


\end{document}